\def\beq{\begin{equation}}
\def\eeq{\end{equation}}
\def\beqa{\begin{eqnarray}}
\def\eeqa{\end{eqnarray}}
\def\bmat{\begin{matrix}}
\def\emat{\end{matrix}}
\def\bmli{\begin{multline}}
\def\emli{\end{multline}}
\def\C{{\mathcal C}}
\def\Ne{\mathbb{N}}
\def\R{\mathbb{R}}
\def\Pe{\mathbb{P}}
\def\al{\alpha}
\def\be{\beta}
\def\ga{\gamma}
\def\del{\delta}
\def\la{\lambda}
\def\La{\Lambda}
\def\om{\omega}
\def\Om{\Omega}
\def\sig{\sigma}
\def\vphi{\varphi}
\def\p{\partial} 
\def\ti{\tilde} 
\def\nea{\nearrow}
\def\bra{\langle}
\def\ket{\rangle}
\def\12{{\textstyle{1\over2}}}
\def\we{\wedge}
\newtheorem{lemma}{Lemma}
\newtheorem{theorem}{Theorem}
\newtheorem{proposition}{Proposition}
\newtheorem{remark}{Remark}
\begin{document}
\title{Galton-Watson trees with first ancestor interaction}
\author[F. Dunlop]{Fran\c cois Dunlop}
\address{Laboratoire de Physique Th{\'e}orique et Modélisation (CNRS,
  UMR 8089)\\ 
CY Cergy Paris Universit{\'e}, 95302 Cergy-Pontoise\\
France}
\email{francois.dunlop@cyu.fr}
\author[A. Mardin]{Arif Mardin}
\address{
Nesin Matematik Köyü,
Şirince mahallesi, 7, Kayserkaya Sokak,
35920 Selçuk, İZMİR,
Turkey.}
\email{mardin.arif@gmail.com}
\begin{abstract}
We consider the set of random Bienaymé-Galton-Watson trees with a bounded number
of offspring and bounded number of generations as a statistical mechanics model:
a random tree is a rooted subtree of the maximal tree; the spin at a given node
of the maximal tree is equal to the number of offspring if the node is present
in the random tree and equal to -1 otherwise. We introduce nearest neighbour
interactions favouring pairs of neighbours which both have a relatively large
offspring. We then prove (1) correlation inequalities and (2) recursion
relations for generating functions, mean number of {    external nodes}, interaction energy
and the corresponding variances. The resulting quadratic dynamical system, in
two dimensions or more depending on the desired number of moments, yields almost
exact numerical results. The balance between offspring distribution and coupling
constant leads to a phase diagram for the analogue of the extinction
probability. On the transition line the mean number of {    external nodes} in generation $n{    +1}$
is found numerically to scale as $n^{-2}$.
\end{abstract}
\date{\today}
\maketitle
\noindent{\bf Keywords:} random tree; galton-watson; correlation inequalities; FKG; extinction.
\section{Introduction}
The principal tool of our investigation is a Bienaymé-Galton-Watson (BGW) tree, which is a specific example of those stochastic processes known as “branching processes”.  As such, it has found many fruitful applications not only in population dynamics, but also in genetics, nuclear chain reactions, etc. It has an interesting history regarding its origins, which we shall try to outline below.
In most textbooks treating the subject, the tree in question is usually called “Galton-Watson tree” because the initial impetus is attributed to the question posed by the British statistician Francis Galton in “Educational Times” in 1873 \cite{G1873}, concerning the possibility of extinction of the names of “noble” families in Britain.  After having received several incorrect solutions, he managed to rouse the interest of his mathematician friend Henry William Watson, who posed the problem correctly but his solution was not quite right: He concluded that the probability of extinction is always (i.e., including the supercritical case) equal to 1 \cite{W1873,GW1874}. At that time the matter was believed to be settled by the solution of Watson.

Publication of a completely correct solution had to wait until 1930 \cite{S30}, when the Danish mathematician J.F. Steffensen’s work using contemporary probabilistic tools was published, in a danish mathematical journal. His article being in danish must surely have prevented its immediate recognition. Nevertheless, three years later he published a more comprehensive version of his work in Annales de l’Institut Henri Poincaré \cite{S33}, written in French, thus enabling him to reach a much wider part of the global mathematical community. 

Interestingly enough, this is not the whole story as far as historical precedence is concerned. The names of A.K. Erlang, J.B.S. Haldane and, above all, Jules Bienaymé (the same Bienaymé of the famous Bienaymé-Chebychev Inequality of probability theory) should also be mentioned. We shall henceforth focus on Bienaymé’s contribution, and invite the reader to consult David Kendall’s article \cite{K66} for a nice exposition of the work of Erlang and Haldane.

Surprisingly, Jules Bienaymé considered the same kind of problem regarding the French aristocracy and famous bourgeois families in France nearly three decades before the appearance of Galton’s famous problem in Educational Times. His communication was published in the journal of the Société Philomatique de Paris in 1845 \cite{B1845}. This important discovery was made in 1972 by C.C. Heyde and E. Seneta \cite{HS72}. What is striking is that although Watson’s solution led to the erroneous conclusion that there would be extinction with probability one even in the supercritical case, Bienaymé had the whole theorem of criticality correctly posed. However, even though a full treatment of the problem for publication has been promised by him in a “mémoire spécial”, no trace of it has been found so far. It seems that nobody can be sure if there is yet another and yet even earlier serious attempt to state and prove the same result waiting to be discovered. In our treatment we shall honour the historical precedence and call our tree a Bienaymé-Galton-Watson  (BGW) tree.
Finally, interested readers are highly recommended to consult the two excellent survey articles by David Kendall \cite{K66,K75}.

We are interested in large random planar (ordered, labeled) rooted trees, which 
may have grown by a Markov process such as a Galton Watson process, and are then
 subject to self-interaction through a Boltzmann weight, with an interaction
energy between first ancestor and offspring.
{    The main motivation behind this work is an attempt to understand the following problem. Being given a BGW tree in the usual sense, to what degree can one change the behaviour of the characteristic values such as the mean number of offspring or extinction probabilities by attributing some additional probabilities to those existing between typical parent-offspring pairs: More precisely, what happens if the offspring of a populous family (i.e. parents having at least two children) have populous families themselves? Such relations being nearest-neighbour type due to the tree structure, we introduced an interaction function which increases the likelihood of this sort of outcome. Lattice models of statistical mechanics employ similar ideas, and we tried to adapt its approach to our problem.
Somewhat similar questions have been treated in the search of the (discrete) time necessary to find the most recent common ancestor of a large population by M.Möhle \cite{M04}.

Two notable features come with our approach. On one side the random tree hierarchical structure yields a fruitful reduction to a dynamical system. On another side the extinction problem of the BGW Markov chain, transferred to the statistical mechanics framework, bears some similarity with the pinning/depinning transition: average tree height bounded or going to infinity in the infinite volume limit.
Whence some analogy with the work of Derrida and Retaux \cite{dr14} about the depinning transition
with disorder on the hierarchical lattice, a toy model of which they solve
using the quadratic map of Collet-Eckmann-Glaser-Martin \cite{cegm84}.

The outline of the paper is as follows.}
In Section~\ref{model} we define the model precisely in terms of a Gibbs measure.
In Section~\ref{mc} we design a Markov chain under which the Gibbs measure is
invariant.
In Section~\ref{spinmodel} we convert random trees to random spin configurations
on the maximal tree, where the spin value at any node is -1 if the branch is
extinct and the number of offspring otherwise.
In Section~\ref{correlation} we prove correlation inequalities of the
Griffiths and FKG types.
In Section~\ref{recursion} we establish a recursion relation for generating
functions of two parameters, a dynamical system where ``time" counts the number
of generations and two-dimensional space $\R^2$ represents the two activities
associated with single offspring and two or more offspring respectively. 
In Section~\ref{leaves} we extend the dynamical system to $\R^6$ to obtain a
recursion for the mean number of {    external nodes}.
In Section~\ref{energy} we extend the dynamical system to $\R^4$ to obtain a
recursion for the mean energy and to $\R^6$ to obtain a
recursion for the variance of the energy.
In Section~\ref{fixed}, in the case where the number of offspring is 0, 1, or 2,
we find
conditions for the presence of a fixed point in the relevant physical domain.
This fixed point is expected to correspond to subcritical or critical trees.
In the critical case, running numerically the dynamical system, we find that
the mean number of {    external nodes} scales like $n^{-2}$ as $n\nea\infty$, contrasting
with the constant value 1 in the noninteracting critical BGW case.
In Section~\ref{p_1p_2}, in the case where the number of offspring is 1 or 2,
we prove convergence of the specific free energy as $n\nea\infty$, and study
numerically its dependence upon the coupling constant.

\section{Model}\label{model}
For any node $i$ in a tree $\om$, let $X_i\in\Ne=\{0,1,2,\dots\}$ denote the 
number of offspring of $i$. The root is node 0. Nodes are labeled \`a la Neveu 
\cite{N86}:
\beq
X_0,X_1,\dots X_{X_0},X_{11},\dots X_{1X_1},X_{21},\dots X_{2X_2},\dots 
\eeq
The generation of a node is its distance to the root. Except for the root, which
belongs to generation 0, the generation of a node equals its number of digits in
Neveu notation. It will be denoted $|i|$. The number of generations of a finite
tree $\om$ is
\beq
|\om|=\max\{|i|:\ i\in\om\}
\eeq 
We denote $a(i)$ the parent (first ancestor) of $i$, and $r(i)$ the rank of 
$i$ within its family, the last digit of its Neveu label. 

A probability measure on the set of planar rooted trees $\om$ with no more than 
$n$ generations is defined as follows. For definiteness we take  as 
{\sl ``non-interacting''} reference measure a BGW probability 
distribution
\beq
\Pe^{GW}(\om)=\prod_{i\in\om}p_{X_i}
\eeq
with $p_k=\Pe^{GW}(X_0=k)$ of bounded support: $2\le K=\max\{k:p_k>0\}<\infty$.
{    The case of unbounded support will be considered in a forthcoming paper
\cite{CDHM22}.}
We denote $\bar k=\sum k\,p_k$.
Expectations in $\Pe^{GW}$ will be denoted $\bra\cdot\ket_{GW}$ and 
$\bra A;B\ket_{GW}=\bra AB\ket_{GW}-\bra A\ket_{GW}\bra B\ket_{GW}$. We recall
\beq
\Bigl\bra X_0\Bigr\ket_{GW}=\bar k,\qquad
\Bigl\bra X_1+\dots+X_{X_0}\Bigr\ket_{GW}=\bar k^2,\qquad
\Bigl\bra\sum_{|i|=m\atop i\in\om}X_i\Bigr\ket_{GW}=\bar k^{m+1}
\eeq
Then, given a pair interaction energy
\beq\label{phi1}
\{0,\dots,K\}\times\{0,\dots,K\}\,\ni(X,Y)\longmapsto\,\vphi(X,Y)\,\in\R
\eeq
and a boundary condition $X_{a(0)}=x\in\{1,\dots,K\}$ specifying the offspring of
a virtual ancestor for the origin, we define a Hamiltonian with first ancestor
interaction,
\beq\label{H}
H^x(\om)=\sum_{i\in\om} \vphi(X_{a(i)},X_i)
\eeq
and a probability measure on the set of trees $\om$ with at most $n$
generations,
\beq\label{Pn}
\Pe^x_n(\om)=\Bigl(\Xi^x_n\Bigr)^{-1}\Pe^{GW}(\om)
e^{-\be H^x(\om)}
\eeq 
\beq\label{Xin}
\Xi^x_n=\sum_{\om}\Pe^{GW}(\om)e^{-\be H^x(\om)}
\eeq
where $\be\ge0$ is the inverse temperature. Expectations in $\Pe^x_n$ will
be denoted $\bra\cdot\ket_n^x$ and
$\bra A;B\ket_n^x=
\bra AB\ket_n^x-\bra A\ket_n^x\bra B\ket_n^x$.
We shall be particularly interested in the average total offspring in
generation $m$, or average population in generation $m+1$,
in a tree with $n$ generations,
\beq
\Bigl\bra\sum_{|i|=m\atop i\in\om}X_i\Bigr\ket_n^x
\eeq
and in the average total energy
\beq
\Bigl\bra H^x \Bigr\ket_n^x
\eeq

The tree represents a hierarchical network, and the nodes are centers of 
activity. The activity of a node $i$ is measured by its offspring $X_i$,
which may be considered as the number of affiliated centers of activity. The 
activities of parent and child nodes are 
expected to be positively correlated, although this could depend upon the type 
of network. Our basic example will be
\beq\label{phi}
\vphi(X,Y)=-f(X)g(Y) 
\eeq
where $f$ and $g$ are non-negative non-decreasing functions on
$\{0,\dots,K\}$. For example
\beq\label{22}
\vphi(X,Y)=-1_{X\ge2}1_{Y\ge2} 
\eeq
where the indicator function $1_A$ takes value one if event $A$ is true and zero
otherwise. 

\section{Detailed balance with respect to $\Pe^x_n(\cdot)$}\label{mc}
In order to clarify a role that time can play in our study, we now define
two {    discrete-time} Markov chains obeying the detailed balance condition
with respect to $\Pe^x_n(\cdot)$. {    Both Markov chains to be defined below are irreducible and aperiodic in their respective state spaces.}
The first one is defined as follows. Draw the initial 
configuration with at most $n$ generations from  $\Pe^{GW}(\cdot)$. Then for
$t\ge0$ take transition probabilities
\beq
\Pe^x(\om^{t+1}=\om'|\,\om^t=\om)
=\Pe^{GW}(\om')e^{-\be (H^x(\om')-H^x(\om))_+}\,,\quad\om'\ne\om
\eeq
and $\om^{t+1}=\om$ with the {    complementary} probability.
The chain may be coupled to an i.i.d. sequence
$\bar\om^0,\bar\om^1,\dots,\bar\om^t,\bar\om^{t+1},\dots$ where each $\bar\om^t$ 
is drawn independently from $\Pe^{GW}(\cdot)$. Take the same initial condition
$\bar\om^0=\om^0$, and then inductively
\beq
\om^{t+1}=\eta_{t+1}\bar\om^{t+1}+(1-\eta_{t+1})\om^t
\eeq
where $\eta_{t+1}=1$ with probability $e^{-\be (H^x(\om^{t+1})-H^x(\om^t))_+}$
and $\eta_{t+1}=0$ with the {    complementary} probability. Detailed balance with respect
to (\ref{Pn}) is clearly satisfied. Such a dynamics making huge (macroscopic)
steps is not very useful.

Our second Markov chain obeying the detailed balance
condition with respect to $\Pe^x_n(\cdot)$ is more like usual Monte Carlo
dynamics, and is defined as follows. Draw the initial configuration from
$\Pe^{GW}(\cdot)$. Then at each time step:
\begin{itemize}
\item 
Pick a generation $m\in\{0,1,\dots n\}$ randomly according to some
probability distribution $\{\la_m\}_{m\in\{0,\dots,n\}}$ such that $\la_0>0$.
\item 
Pick a rank $i_1,\dots,i_m$ à la Neveu, each $i_l$ independently with
probability $p_{i_l}$.
  \item 
If the corresponding site $i\notin\om$, do nothing and exit the time step.
  \item Flip a fair coin, $\sig=\pm1$ with equal probabilities.
  \item If $X_i+\sig\notin\{0,\dots,K\}$, do nothing and exit the time step.
\item If $\sig=1$, with probability $p_{X_i+1}$ let $X'_i=X_i+1$ and from the
additional node, on the right of the $X_i$ already existing nodes, draw a tree
with $n-|i|$ generations from $\Pe^{GW}_{n-|i|}(\cdot)$.
\item 
If $\sig=-1$, with probability $p_{X_i-1}$ let $X'_i=X_i-1$ by removing the
rightmost node stemming from $i$ and the associated sub-tree.
\item 
If a new $X'_i$ has been defined, accept the new configuration $\om'$ with
probability $e^{-\be (H^x(\om')-H^x(\om))_+}$. 
\end{itemize}
Note that $X_i'=X_i+1\Rightarrow H^x(\om')<H^x(\om)$.
Two configurations $\om$ and $\om'$ are connected by one step of the Markov
chain if and only if $\exists\,i\in\om,\, i\in\om'$ such that $X'_i-X_i=\pm1$
and $\forall\,j\ne i,\,j\in\om\cap\om'$, $X'_j=X_j$. One can check detailed
balance, with $X'_i=X_i+1$ for definiteness,
\beq
{\Pe(\om\to\om')\over\Pe(\om'\to\om)}=
{\Pe^x_n(\om')\over\Pe^x_n(\om)}=
{p_{X'_i}\over p_{X_i}}\Pe^{GW}_{n-|i|}(\om'\setminus\om)\,
e^{-\be \bigl(H^x(\om')-H^x(\om)\bigr)_+}
\eeq
where $\om'\setminus\om$ is the configuration $\om'$ restricted to nodes not in
$\om$.
The dynamics can be accelerated by dividing all
transition rates by $\max\{p_k\}_{0\le k\le K}$.

\section{Random tree as a spin model}\label{spinmodel}
For any label $i$, for any planar rooted tree $\om$, we define $ X_i(\om)$
as the number of offspring of $i$ if $i\in\om$, and -1 otherwise. Therefore
$X_{i_1\dots i_k}=0\ \Rightarrow\ X_{i_1\dots i_ki_{k+1}}=-1$ $\forall i_{k+1}$. More
generally $i_{k+1}>X_{i_1\dots i_k}\ \Rightarrow X_{i_1\dots i_ki_{k+1}}=-1$.
\begin{proposition}\label{spin}
Let $n\ge1$.
Let $\La_n$ be the {    maximal} tree obtained with $X_i=K\ \forall i$, rooted at 0, with
at most $n$ generations. The number of sites $i\in\La_n$ is
\beq\label{Lan}
|\La_n|=1+K+K^2+\dots+K^n={K^{n+1}-1\over K-1}
\eeq
Let
\beq
\Om_n=\{-1,0,1,\dots,K\}^{\La_n}
\eeq
By convention, let $p_{-1}=1$. Fix a boundary condition
$X_{a(0)}\in\{1,\dots,K\}$. For $\chi=\{X_i\}_{i\in\La_n}\ \in\ \Om_n$ let
\beq\label{muGW}
\mu^{GW}(\chi)=\prod_{i\in\La_n}p_{X_i}
\Bigl(1_{X_{a(i)}\ge r(i)}1_{X_i\ge0}+1_{X_{a(i)}<r(i)}1_{X_i<0}\Bigr)
\eeq
where the product counting measure on $\Om_n$ is understood. Recall
\beq
\Pe^{GW}(\om)=\prod_{i\in\om}p_{X_i}
\eeq
Then there is a bijection $\chi\leftrightarrow\om$ between the support of
$\mu^{GW}$ and the set of BGW trees with at most
$n$ generations, and $\mu^{GW}\sim \Pe^{GW}$:
\beq\label{omchi}
\chi\mapsto \om(\chi),\quad\Pe^{GW}(\om(\chi))=\mu^{GW}(\chi)\ ;\qquad
\om\mapsto\chi(\om),\quad\mu^{GW}(\chi(\om))=\Pe^{GW}(\om)
\eeq
Moreover, let
\beq\label{HGW}
H^{GW}(\chi)=-\sum_{i\in\La_n}\Bigl(1_{X_{a(i)}\ge r(i)}1_{X_i\ge0}+1_{X_{a(i)}<r(i)}1_{X_i<0}\Bigr)
\eeq
and for $\la>0$,
\beq\label{mula}
\mu^{GW}_\la(\chi)=Z_\la^{-1}\exp(-\la H^{GW})\prod_{i\in\La_n}p_{X_i}
\eeq
where the partition function $Z_\la$ normalizes the probability. Then
$\mu^{GW}_\la$ converges in distribution to $\mu^{GW}$ as $\la\to+\infty$.
\end{proposition}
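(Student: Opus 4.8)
The plan is to dispatch the two assertions separately. The bijection with matching weights is a bookkeeping check that the indicator constraints in (\ref{muGW}) say exactly that $\{i\in\La_n:X_i\ge0\}$ is a rooted planar subtree of $\La_n$ whose offspring counts are read off from the $X_i$; the stated convergence is then a standard ground‑state (zero‑temperature) limit.

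\emph{The bijection.} I would take $\om(\chi):=\{i\in\La_n:X_i\ge0\}$ in one direction and, in the other, the map $\om\mapsto\chi(\om)=\{X_i(\om)\}_{i\in\La_n}$ with $X_i(\om)$ as defined at the start of this section. The key point is that the indicator factor in (\ref{muGW}) at a node $i$ forces the equivalence $X_i\ge0\iff X_{a(i)}\ge r(i)$, i.e.\ ``$i$ is present iff its parent has at least $r(i)$ offspring''. From this one checks, for $\chi$ in the support: $\om(\chi)$ contains the root (because $X_{a(0)}=x\ge1\ge r(0)$ forces $X_0\ge0$); $i\in\om(\chi)\Rightarrow a(i)\in\om(\chi)$ (the constraint at $i$ forces $X_{a(i)}\ge r(i)\ge1$); and the children of $i$ present in $\om(\chi)$ are precisely $i1,\dots,iX_i$ (the constraint at $ij$ reads $X_{ij}\ge0\iff j\le X_i$) --- so $\om(\chi)$ is a genuine rooted subtree whose offspring counts are the $X_i$. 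Conversely, every constraint in (\ref{muGW}) is automatically satisfied by $\chi(\om)$: a planar BGW tree is left‑justified, so a present node has a present parent of sufficient offspring count, and an absent node either has an absent parent or a parent with too few offspring; meanwhile $p_{X_i}>0$ at present nodes and $p_{-1}=1$ at absent ones. The two maps are then readily seen to be mutually inverse, and on the support the indicator factor equals $1$, whence $\mu^{GW}(\chi)=\prod_{i\in\La_n}p_{X_i}=\prod_{i\in\om(\chi)}p_{X_i}=\Pe^{GW}(\om(\chi))$, which is (\ref{omchi}); summing over $\chi$ also shows $\mu^{GW}$ is a probability measure.

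\emph{The limit.} Let $c_i(\chi)\in\{0,1\}$ denote the $i$-th indicator factor appearing in (\ref{muGW}) and (\ref{HGW}), so $H^{GW}(\chi)=-\sum_{i\in\La_n}c_i(\chi)\ge-|\La_n|$, with equality exactly on $\mathcal M:=\{\chi:c_i(\chi)=1\ \forall i\}$; note $\Supp\mu^{GW}=\mathcal M\cap\{\chi:p_{X_i}>0\ \forall i\}$, and that $H^{GW}$ is integer‑valued with $H^{GW}\ge-|\La_n|+1$ off $\mathcal M$. With $\ti Z_\la:=Z_\la e^{-\la|\La_n|}$ one has $\mu^{GW}_\la(\chi)=\ti Z_\la^{-1}e^{-\la(H^{GW}(\chi)+|\La_n|)}\prod_{i\in\La_n}p_{X_i}$, whose exponent vanishes on $\mathcal M$ and is $\le-\la$ elsewhere. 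Since $\Om_n$ is finite, letting $\la\to+\infty$ yields $\ti Z_\la\to\sum_{\chi\in\mathcal M}\prod_i p_{X_i}=\sum_{\chi\in\Supp\mu^{GW}}\mu^{GW}(\chi)=1$ and, for each fixed $\chi$, $\mu^{GW}_\la(\chi)\to\mu^{GW}(\chi)$, the limit being $0$ off $\mathcal M$ and equal to $\prod_i p_{X_i}=\mu^{GW}(\chi)$ on $\mathcal M$. Pointwise convergence of probability mass functions on the finite set $\Om_n$ is convergence in distribution, and the proof is complete.

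I do not expect a real obstacle. The only part that needs genuine care is the bookkeeping in the bijection: getting the root and boundary conventions right (the virtual‑ancestor spin $X_{a(0)}=x$ and the rank $r(0)$), handling the generation‑$n$ leaves of $\La_n$, whose spins record unrealised offspring counts and are constrained in (\ref{muGW}) only through their parents, and checking that it is precisely the left‑justified structure of planar BGW trees that makes every constraint in (\ref{muGW}) --- at present and at absent nodes alike --- automatically hold for configurations of the form $\chi(\om)$.
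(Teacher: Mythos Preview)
Your proposal is correct and follows essentially the same approach as the paper: the bijection is defined via $\om(\chi)=\{i:X_i\ge0\}$ with the indicator product singling out the support, and the convergence is obtained by the ground-state argument using the unit spectral gap of $H^{GW}$. Your write-up is in fact more detailed than the paper's own proof, which is quite terse on the bookkeeping you flag (root/boundary conventions, left-justification), but the ideas are the same.
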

In other words, BGW configurations are the ground states of the BGW Hamiltonian
(\ref{HGW}). Note that by virtue of the boundary condition, we have $X_0\ge0$
with probability 1. 
\begin{proof}
\beq\label{11}
\prod_{i\in\La_n}\Bigl(1_{X_{a(i)}\ge r(i)}1_{X_i\ge0}+1_{X_{a(i)}<r(i)}1_{X_i<0}\Bigr)\ \in\{0,1\}
\eeq
because each factor is 0 or 1.
The inverse image of 1 by (\ref{11}) is the support of $\mu^{GW}$.
The set of nodes of $\om(\chi)$ is the set of
sites $i$ such that $X_i\ge0$. The indicator (\ref{11}) guarantees that
each node has a unique ancestor, which implies that $\om(\chi)$ is a tree.
Conversely the set of sites of
$\chi(\om)$ such that $X_i<0$ is $\La_n\setminus\om$. This proves (\ref{omchi}).
The Hamiltonian (\ref{HGW}) takes values
$H^{GW}(\chi)\in\{-|\La_n|,-(|\La_n|-1),\dots,0\}$.
It takes the value $-|\La_n|$ if and only if $\chi=\chi(\om)$ for some BGW tree
$\om$. There is a gap equal to one relative to the other states.
As $\la\to\infty$ the measure
concentrates on the ground states, where the Hamiltonian takes the value
$-|\La_n|$, which corresponds to (\ref{muGW}).
For curiosity, an example of $\chi$ such that $H^{GW}(\chi)=0$ is
\beq
X_i=\left\{
\begin{matrix}
-1&|i|\ {\rm even}\cr
K&|i|\ {\rm odd}
\end{matrix}
\right.
\eeq
where $|i|$ is the generation of node $i$.
\end{proof}

Now (\ref{phi1}) can be extended to
\beq\label{phi2}
\{-1,0,\dots,K\}\times\{-1,0,\dots,K\}\,\ni(X,Y)\longmapsto\,\vphi(X,Y)\,\in\R
\eeq
with $\vphi(X,-1)=\vphi(-1,X)=0\ \forall X$.
Given this isomorphism, from now on we'll use freely $\bra\cdot\ket_{GW}$
and $\bra\cdot\ket_n^x$ based on either representation of BGW trees, and
$\bra\cdot\ket_{\la,GW}$ and $\bra\cdot\ket_{\la,n}^x$ based on (\ref{mula}).
The interaction Hamiltonian will be denoted $H^x(\om)$ or $H^x(\chi)$ according
to the context.

For illustration, for any $i\in\La_n$, the probability that the random tree
$\om$ includes $i$, and the mean offspring of $i$ are respectively
\beq
\Pe^x_n(\om\ni i)=\Bigl\bra 1_{X_i\ge0}\Bigr\ket_n^x\qquad{\rm and}\qquad
\Bigl\bra X_i\,1_{X_i\ge0}\Bigr\ket_n^x
\eeq

The spin representation can also be viewed as a lattice gas representation with
$n_i=X_i+1$ the number of particles at $i\in\La_n$.

\section{Correlation inequalities}\label{correlation}
For definiteness we remain with a bounded number of offspring,
$X_i\leq K<\infty$, but correlation inequalities can be extended by
continuity to any offspring distribution, subject to existence of suitable
moments. Also the offspring distribution could depend upon the site $i\in\La_n$,
like a random field Ising model.
\subsection{Griffiths inequalities}
Following Ginibre \cite{G70}, let $\C_n$ denote the positive cone of
multinomials with 
non-negative coefficients in variables $f(X_i)$ where $|i|\le n$ and $f(\cdot)$ 
runs over non-negative non-decreasing functions on $\{-1,0,\dots,K\}$ with
$f(-1)=0$. 
\begin{lemma}\label{GGW}
Let $n\ge0$. Let $\om$ and $\om'$ be two independent
BGW trees with at most $n$ generations obeying the same probability law
$\Pe^{GW}$. Then for any family
$\{f_\al,\,i_\al\}_\al$ of non-negative non-decreasing functions $f_\al$ on
$\{-1,0,\dots,K\}$ with $f_\al(-1)=0$ and node labels $i_\al$ with $|i_\al|\le n$,
for any choices of $\pm$,
\beq\label{fal}
\sum_{\om,\om'}\Pe^{GW}(\om)\Pe^{GW}(\om')
\prod_\al\Bigl(f_\al( X_{i_\al})\pm f_\al( X'_{i_\al})\Bigr)\ge0
\eeq
Moreover for any $F,G\in\C_n$,
\beq
\bigl\bra F;G\bigr\ket_{GW}\ge0
\eeq
\end{lemma}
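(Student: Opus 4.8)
The plan is to adapt Ginibre's general framework for Griffiths-type inequalities to the product measure $\Pe^{GW}\otimes\Pe^{GW}$, exploiting the spin representation from Proposition~\ref{spin}. The key point is that, in the spin representation, $\Pe^{GW}$ is a product measure over sites $i\in\La_n$ of the single-site kernels $p_{X_i}\bigl(1_{X_{a(i)}\ge r(i)}1_{X_i\ge0}+1_{X_{a(i)}<r(i)}1_{X_i<0}\bigr)$, so after passing to the duplicated system the relevant object factorizes over sites. Following Ginibre, I would introduce, for each node $i$, the ``sum and difference'' variables and observe that the expectation in \eqref{fal} is a product (or, more precisely, an integral of a product over the duplicated configuration) of single-site contributions; each such contribution, for a monomial in the $f_\al(X_{i_\al})\pm f_\al(X'_{i_\al})$, must be shown to be nonnegative.

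First I would reduce to the case where each $f_\al$ is a monomial, i.e.\ expand the product and use multilinearity, so that it suffices to prove \eqref{fal} when the left side is $\Ee^{GW}\otimes\Ee^{GW}\bigl[\prod_\al (f_\al(X_{i_\al})\pm f_\al(X'_{i_\al}))\bigr]$ with an arbitrary finite multiset of (function, label) pairs. Second, I would group the factors by the node label $i$ they involve; since the $X_i$ at distinct sites are independent under $\Pe^{GW}$ in the spin representation (conditionally on the parent value, but the conditioning only restricts the domain and the same kernel applies to both copies), the duplicated expectation becomes a product over the finitely many sites $i$ appearing, of terms of the form $\Ee_i\otimes\Ee_i\bigl[\prod_{\al:\,i_\al=i}(h_\al(X_i)\pm h_\al(X'_i))\bigr]$ where each $h_\al\ge0$ is nondecreasing. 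Third, the crucial single-site lemma: for any probability measure $\nu$ on a linearly ordered set and any nonnegative nondecreasing functions $h_1,\dots,h_r$, one has $\int\!\!\int \prod_{j}(h_j(x)\pm h_j(x'))\,\nu(dx)\nu(dx')\ge0$. This is the classical Ginibre/FKG-type positivity of the duplicated measure under the diagonal symmetry $x\leftrightarrow x'$, which follows by symmetrizing over the $2^r$ sign patterns coming from swapping $x$ and $x'$ in each factor, or equivalently by noting the integrand is a sum of squares after the change of variables to $u=x\wedge x'$, $v=x\vee x'$; positivity is preserved under the ordering constraint in \eqref{muGW} because it only restricts the support of $\nu$, not its positivity or the monotonicity of the $h_\al$.

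The second assertion, $\bra F;G\ket_{GW}\ge0$ for $F,G\in\C_n$, then follows from the $-$ sign choice in \eqref{fal} by the standard doubling trick: writing $\bra F;G\ket_{GW}=\12\,\Ee^{GW}\otimes\Ee^{GW}\bigl[(F(X)-F(X'))(G(X)-G(X'))\bigr]$ and expanding $F$ and $G$ into monomials with nonnegative coefficients in the $f(X_i)$, every resulting cross term is of the form \eqref{fal} with all signs $-$, hence nonnegative; summing over terms with nonnegative weights gives the claim. The main obstacle I anticipate is a bookkeeping one rather than a conceptual one: making precise that, in the spin representation, the duplicated BGW measure really does factorize site-by-site in a way compatible with the ordering indicators in \eqref{muGW} --- one must check that the conditioning event $\{X_{a(i)}\ge r(i)\}$ versus $\{X_{a(i)}<r(i)\}$ only ever shrinks the single-site state space of $X_i$ (to $\{0,\dots,K\}$ or to $\{-1\}$ respectively) and never couples the two copies beyond what is already handled by induction on generation, so that the single-site Ginibre lemma applies verbatim at each site. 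Once that is set up cleanly, the rest is the routine expansion-and-symmetrization argument.
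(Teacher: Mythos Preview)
Your outline is in the right spirit---Ginibre duplication is exactly the mechanism---but the step you flag as ``bookkeeping'' is in fact the whole difficulty, and as written your second step is incorrect. The duplicated BGW law does \emph{not} factor over sites: in the spin representation \eqref{muGW} the single-site kernel at $i$ depends on the parent variable $X_{a(i)}$, so grouping the $\alpha$'s by site and writing the expectation as a product $\prod_i \Ee_i\otimes\Ee_i[\dots]$ is simply false. Your fallback---induction on generation, using that the conditioning ``only shrinks the state space''---does not go through as stated either. After integrating out a leaf $j$ conditionally on $(X_{a(j)},X'_{a(j)})$ you obtain a function $G_j(X_{a(j)},X'_{a(j)})$ which is nonnegative, but to feed it back into the induction you would need $G_j$ to be a nonnegative combination of products $\prod_\beta\bigl(g_\beta(X_{a(j)})\pm g_\beta(X'_{a(j)})\bigr)$ with each $g_\beta$ nonnegative nondecreasing. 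When the two copies see different conditioning events (one parent has $\ge r(j)$ offspring, the other does not), $G_j$ mixes symmetric and antisymmetric pieces and contains a $\phi(x)\phi(x')$ term whose decomposition into such products with nonnegative coefficients is not automatic; your single-site lemma, which assumes the \emph{same} $\nu$ in both coordinates, does not cover this case.

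The paper's proof supplies precisely the missing idea, and it is not a bookkeeping device. It replaces the hard constraint by the soft measure $\mu^{GW}_\la$ of \eqref{mula}, which \emph{is} a genuine product over sites of $p_{X_i}$, with the tree constraints pushed into the Hamiltonian $H^{GW}$. After duplicating and expanding $\exp\{-\la(H^{GW}(\chi)+H^{GW}(\chi'))\}$, each bond contributes factors of the form $1_{X\ge k}\pm1_{X'\ge k}$ (increasing) and $1_{X<l}\pm1_{X'<l}$ (decreasing). The expectation then truly factorizes over sites, and at each site one symmetrizes to $X>X'$; the only sign trouble comes from the decreasing ``minus'' factors, and the key observation is that these always arise in \emph{pairs}---one at the parent and one at the child---from the decomposition of $1_{X_{a(i)}<r(i)}1_{X_i<0}+1_{X'_{a(i)}<r(i)}1_{X'_i<0}$, so their total count is even and the product of single-site signs is $+1$. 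Taking $\la\to\infty$ recovers $\Pe^{GW}$. This pairing/parity argument is the substantive content that your sketch is missing; once you accept the soft-constraint trick, the rest (including $\langle F;G\rangle_{GW}\ge0$) follows as you indicate.
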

\begin{proof}
The second assertion is a straightforward consequence of the 
first, which we prove using (\ref{mula}), where we write
\beq\label{mula'}
\mu^{GW}_\la(\chi)\mu^{GW}_\la(\chi')\approx
\exp\bigl(-\la\bigl\{H^{GW}(\chi)+H^{GW}(\chi')\bigr\}\bigr)\prod_ip_{X_i}p_{X_i'}
\eeq
and then
\beqa
1_{X_{a(i)}\ge r(i)}1_{X_i\ge0}+1_{X_{a(i)}'\ge r(i)}1_{X_i'\ge0}=
\12\bigl(1_{X_{a(i)}\ge r(i)}+1_{X_{a(i)}'\ge r(i)}\bigr)
\bigl(1_{X_i\ge0}+1_{X_i'\ge0}\bigr)\cr
+\12\bigl(1_{X_{a(i)}\ge r(i)}-1_{X_{a(i)}'\ge r(i)}\bigr)
\bigl(1_{X_i\ge0}-1_{X_i'\ge0}\bigr)
\eeqa
\beqa\label{ql'}
1_{X_{a(i)}< r(i)}1_{X_i<0}+1_{X_{a(i)}'< r(i)}1_{X_i'<0}=
\12\bigl(1_{X_{a(i)}< r(i)}+1_{X_{a(i)}'< r(i)}\bigr)
\bigl(1_{X_i<0}+1_{X_i'<0}\bigr)\cr
+\12\bigl(1_{X_{a(i)}< r(i)}-1_{X_{a(i)}'< r(i)}\bigr)
\bigl(1_{X_i<0}-1_{X_i'<0}\bigr)
\eeqa
Expanding everything in (\ref{fal})(\ref{mula'}) yields a sum of terms
factorized over $i$, with each factor of the form
\beqa
\sum_{X,X'}p_Xp_{X'}\prod_k\bigl(1_{X\ge k}+1_{X'\ge k}\bigr)^{p_k}
\prod_{k'}\bigl(1_{X\ge k'}-1_{X'\ge k'}\bigr)^{q_{k'}}\cr
\prod_l\bigl(1_{X< l}+1_{X'< l}\bigr)^{p'_l}
\prod_{l'}\bigl(1_{X< l'}-1_{X'< l'}\bigr)^{q'_{l'}}\cr
\prod_{\al_i}\bigl(f_{\al_i}(X)+f_{\al_i}(X')\bigr)
\prod_{\al_i'}\bigl(f_{\al_i'}(X)-f_{\al_i'}(X')\bigr)
\eeqa
where the sums over $X,X'$ run over $\{-1,0,\dots,K\}$ and the products over
$k,k',l,l'$ run over $\{0,\dots,K\}$, while $p_k,q_{k'},p'_l,q'_{l'}$ are
collections of arbitrary fixed nonnegative integers. The indices $\al_i,\al_i'$
are for those $\al$ which fall on the given site.

The result is zero by symmetry if the number of factors
with - sign is odd. Otherwise, up to a factor 2, the summation can be restricted
to $X>X'$, where the summand has the sign $(-1)^{\sum_{l'}q'_{l'}}$. Indeed $1_{X<l}$
is a decreasing function while all others are increasing. The product over $i$
then yields a factor $(-1)^{\sum_i\sum_{l_i'}q'_{l_i'}}$, equal to $+1$, because all
$q'_{l'}$ factors come in pairs from (\ref{ql'}).
\end{proof}

\begin{theorem}\label{Gn} 
Let $n\ge0$ and $x\in\{1,\dots,K\}$. Assume (\ref{Pn})(\ref{Xin}) with
\beq\label{HCn}
-H^x(\chi)\in\C_n
\eeq
Then for any $F,G\in\C_n$,
\beq\label{gri}
\bigl\bra F;G\bigr\ket_n^x\ge0
\eeq
Moreover $\bra F\ket_n^x$ is non-decreasing in $n$ and in $\be$ and in
$x$ and in the coefficient of any term in $-H^x(\cdot)$ as an element of $\C_n$.
In particular $\forall\ 0\le m\le n$, the mean total offspring in generation $m$
obeys
\beq\label{gri2}
\bar k^{m+1}\le\Bigl\bra\sum_{|i|=m}X_i\,1_{X_i\ge0}\Bigr\ket_n^x\le K^{m+1}
\eeq
and is increasing in $n$ and has a limit as $n\to\infty$.
\end{theorem}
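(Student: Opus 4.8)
The plan is to derive Theorem~\ref{Gn} from Lemma~\ref{GGW} via the standard Ginibre duplication argument, exactly as in the non-interacting case, the only new input being the hypothesis $-H^x(\chi)\in\C_n$. First I would write the expectation $\bra F;G\ket_n^x$ as a double sum over two independent copies $\chi,\chi'$ distributed according to $\Pe^x_n$. Using the identity $\bra F;G\ket_n^x=\12\sum_{\chi,\chi'}\Pe^x_n(\chi)\Pe^x_n(\chi')\bigl(F(\chi)-F(\chi')\bigr)\bigl(G(\chi)-G(\chi')\bigr)$, and then inserting the definition $\Pe^x_n(\chi)=(\Xi^x_n)^{-1}\Pe^{GW}(\chi)e^{-\be H^x(\chi)}$, one is left with $(\Xi^x_n)^{-2}$ times a double sum over BGW-weighted configurations of $\bigl(F(\chi)-F(\chi')\bigr)\bigl(G(\chi)-G(\chi')\bigr)e^{-\be(H^x(\chi)+H^x(\chi'))}$. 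Expanding the exponential as a convergent power series, each term $\be^N/N!$ produces a product of factors $(-H^x(\chi))$ and $(-H^x(\chi'))$; since $-H^x\in\C_n$ each such factor, and hence each monomial in the expansion, is a sum of terms of the form $\prod_\al f_\al(X_{i_\al})$ with the $f_\al$ non-negative non-decreasing and vanishing at $-1$. Likewise $F(\chi)-F(\chi')=\sum$ of terms $\prod_\al\bigl(f_\al(X_{i_\al})-f_\al(X'_{i_\al})\bigr)$ up to non-negative coefficients (and similarly for $G$), using that $F,G\in\C_n$. Collecting, the whole double sum is a non-negative linear combination of expressions of the form $\sum_{\om,\om'}\Pe^{GW}(\om)\Pe^{GW}(\om')\prod_\al\bigl(f_\al(X_{i_\al})\pm f_\al(X'_{i_\al})\bigr)$, each of which is $\ge0$ by (\ref{fal}). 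Hence $\bra F;G\ket_n^x\ge0$, which is (\ref{gri}).

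Next, monotonicity. Monotonicity of $\bra F\ket_n^x$ in the coefficient of a term $T\in\C_n$ appearing in $-H^x$ follows by differentiating: $\frac{\p}{\p c}\bra F\ket_n^x=\be\,\bra F;T\ket_n^x\ge0$ by (\ref{gri}), since $F,T\in\C_n$. The same computation with $c$ replaced by $\be$ gives $\frac{\p}{\p\be}\bra F\ket_n^x=\bra F;(-H^x)\ket_n^x\ge0$, because $-H^x\in\C_n$. Monotonicity in the boundary condition $x$: increasing $x$ from $x$ to $x+1$ only enlarges the allowed configurations at the root (it relaxes the constraint $1_{X_{a(0)}\ge r(0)}$), and on the level of the Hamiltonian, $\vphi(X_{a(0)},X_0)=-f(x)g(X_0)$ with $f$ non-decreasing, so raising $x$ increases the coefficient of the $\C_n$-term $g(X_0)$ in $-H^x$; thus it reduces to the coefficient-monotonicity already proved (one should check the contribution of the changed hard-core constraint is itself of the required monotone form, which it is since $1_{X_{a(0)}\ge r(0)}$ is non-decreasing in $X_{a(0)}$). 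Monotonicity in $n$: passing from $n$ to $n+1$ adds one more generation of nodes; the measure $\Pe^x_{n+1}$ restricted to the event that generation $n+1$ is entirely absent (all spins $-1$ there) coincides, up to the hard-core weights, with $\Pe^x_n$, and adding the extra generation can only be obtained by summing in extra non-negative $\C_{n+1}$ contributions — more carefully, one writes $\bra F\ket^x_{n+1}$ as an average over the $n+1$-st generation offspring of a conditional expectation that, by the recursive structure, is itself of the form covered by the coefficient argument; I would phrase this as: turning on the possibility of a non-trivial $(n+1)$-st generation amounts to adding $\C_{n+1}$-terms to $-H$, wherefore $\bra F\ket^x_n\le\bra F\ket^x_{n+1}$.

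Finally, for (\ref{gri2}): the function $F=\sum_{|i|=m}f(X_i)$ with $f(X)=X\,1_{X\ge0}$ (i.e. $f(X)=X$ for $X\ge0$, $f(-1)=0$) is non-negative non-decreasing with $f(-1)=0$, hence $F\in\C_n$; the lower bound $\bar k^{m+1}\le\bra F\ket_n^x$ is the case $\be=0$ (where $\bra F\ket_n^0=\bra\sum_{|i|=m}X_i\ket_{GW}=\bar k^{m+1}$ by the recalled BGW identity) combined with monotonicity in $\be$; the upper bound $\bra F\ket_n^x\le K^{m+1}$ is deterministic, since there are at most $K^{m}$ present nodes in generation $m$, each with at most $K$ offspring (more precisely $\sum_{|i|=m}X_i\le\sum_{|i|=m,\,i\in\La_n}K=K^{m+1}$ pointwise on $\Om_n$). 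Increase in $n$ is the case $F=\sum_{|i|=m}f(X_i)\in\C_n$ of the monotonicity in $n$ just proved, and the limit as $n\to\infty$ then exists by monotone convergence, bounded above by $K^{m+1}$.

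The main obstacle I anticipate is not inequality (\ref{gri}) itself — that is a routine repackaging of Lemma~\ref{GGW} once one observes that the power-series expansion of $e^{-\be H^x}$ stays inside the cone — but the monotonicity in $n$ and in $x$. These require care because changing $n$ or $x$ changes the configuration space or the hard-core constraints, not merely a coupling constant, so one must argue that the extra structural freedom can be encoded as adding further non-negative $\C$-coefficients to $-H$ (equivalently, writing the $n$-generation or $x$-boundary measure as a $\be\to$ limit, or a conditioning, of a larger one). Making that encoding precise — in particular checking that the hard-core indicator $1_{X_{a(i)}\ge r(i)}$ is handled correctly as an element built from non-decreasing functions — is the delicate point; everything else is bookkeeping.
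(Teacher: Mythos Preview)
Your derivation of (\ref{gri}) via duplication and power-series expansion of $e^{-\be(H^x+H^{x\prime})}$ is precisely the standard Ginibre argument the paper invokes by citing \cite{G70}; the monotonicity in $\be$ and in coefficients, and the bounds (\ref{gri2}), likewise match the paper. Two corrections and one missing ingredient:

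\textbf{A minor sloppiness.} For a monomial $F=\prod_\al f_\al$, the difference $F(\chi)-F(\chi')$ is \emph{not} a non-negative combination of pure-difference products $\prod_\al(f_\al-f_\al')$. The correct Ginibre identity (write $2f_\al(X)=(f_\al(X)+f_\al(X'))+(f_\al(X)-f_\al(X'))$ and expand) gives it as a non-negative combination of products $\prod_\al\bigl(f_\al(X_{i_\al})\pm f_\al(X'_{i_\al})\bigr)$ with an \emph{odd} number of minus signs. This is harmless here since (\ref{fal}) allows arbitrary $\pm$ choices, but your sentence as written is false.

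\textbf{Monotonicity in $x$.} Your claim that increasing $x$ ``relaxes the constraint $1_{X_{a(0)}\ge r(0)}$'' is wrong: the root has rank $r(0)=1$, and since $x\ge1$ this indicator is identically $1$ for every admissible boundary condition. No hard-core constraint changes. Only the boundary term $\vphi(x,X_0)$ in $H^x$ varies, so the argument is \emph{purely} the coefficient-monotonicity you give in the second half of that sentence (and this does require, beyond $-H^x\in\C_n$, that $-\vphi(x+1,X_0)+\vphi(x,X_0)\in\C_n$, which is automatic for $\vphi=-fg$).

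\textbf{Monotonicity in $n$.} You correctly flag this as the delicate point but do not resolve it. The paper's device is clean and is exactly the ``encoding as extra $\C$-coefficients'' you were groping for: fix $n'>n$, work throughout on $\La_{n'}$, and let the inverse temperature depend on the generation of the link, setting $\be_{|j|}=s\be$ for links with $|j|>n$ and $\be_{|j|}=\be$ for $|j|\le n$, $0\le s\le1$. At $s=0$ the generations beyond $n$ carry only the free BGW weight, which integrates to $1$ for any observable $F$ supported in $\{|i|\le n\}$, so $\bra F\ket_{n,n',0}=\bra F\ket_n$; at $s=1$ one recovers $\bra F\ket_{n'}$. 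The $s$-derivative is a covariance of $F$ with an element of $\C_{n'}$, hence $\ge0$ by (\ref{gri}) applied on $\La_{n'}$. The point is that the configuration space is held fixed at $\Om_{n'}$ throughout; $n$ enters only through which couplings are switched on, so no separate argument about hard-core indicators is needed.
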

\begin{proof}
Inequality (\ref{gri}) is a standard consequence of Lemma \ref{GGW} \cite{G70}.
The first inequality in (\ref{gri2}) is comparison with $\be=0$, it follows
from (\ref{gri}). The second is trivial.
Monotonicity in $n$ is obtained as follows: consider an observable supported in
$\{|i|\le m\}$. Let $n'>n$. Let $\be=\be_{|j|}$ depend upon the generation of the
link $(j,a(j))$. For $0\le s\le 1$ define
$$
\bra\cdot\ket_{n,n',s}\ :\qquad \be_{|j|}=s\be \hbox{ for } n<|j|<n'
$$
Then
$$
\bra\cdot\ket_{n,n',0}=\bra\cdot\ket_n\,,\qquad
\bra\cdot\ket_{n,n',1}=\bra\cdot\ket_{n'}
$$
Monotonicity in $n$ follows from monotonicity in $s$, which follows from
 (\ref{gri}).
\end{proof}
\begin{remark}
Unlike the Ising model, here there is no spin flip symmetry, hence the
restriction to positive $f_\al$, $F$ and $G$.
\end{remark}

\subsection{FKG inequalities}
In the original paper \cite{FKG71}, the authors write that one can ``extend
straightforwardly to more general lattice gases where one allows more than one
particle on each site''. Here we give a statement and a proof for our model,
following \cite{FV17}.

\begin{theorem}\label{FKG}(FKG inequality)
Let $n\ge0$ and $x\in\{1,\dots,K\}$. Assume (\ref{phi1})-(\ref{Xin})(\ref{phi2})
with $\forall\ X,Y,X',Y'\in\{-1,0,\dots,K\}$
\beq\label{phifkg}
\vphi(X,Y)+\vphi(X',Y')\ge
\vphi(X\we X',Y\we Y')+\vphi(X\vee X',Y\vee Y')
\eeq
Then for any non-decreasing functions $F,G:\Om_n\rightarrow\R$
\beq\label{fkg}
\bigl\bra F;G\bigr\ket_n^x\ge0
\eeq
\end{theorem}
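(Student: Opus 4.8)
The plan is to verify the Holley/FKG sufficient condition (the lattice FKG criterion) for the measure $\Pe^x_n(\cdot)$ on the product poset $\Om_n=\{-1,0,\dots,K\}^{\La_n}$, equipped with the coordinatewise partial order. Recall that a strictly positive probability measure $\nu$ on a finite distributive lattice satisfies the FKG inequality as soon as
\beq\label{fkgcond}
\nu(\chi\vee\chi')\,\nu(\chi\we\chi')\ \ge\ \nu(\chi)\,\nu(\chi')\qquad\forall\,\chi,\chi'\in\Om_n.
\eeq
So first I would reduce \eqref{fkg} to checking \eqref{fkgcond} for the density
$$
\nu(\chi)=\Bigl(\Xi^x_n\Bigr)^{-1}\prod_{i\in\La_n}p_{X_i}\,
\Bigl(1_{X_{a(i)}\ge r(i)}1_{X_i\ge0}+1_{X_{a(i)}<r(i)}1_{X_i<0}\Bigr)\,
e^{-\be\sum_{i\in\La_n}\vphi(X_{a(i)},X_i)}\,,
$$
using the isomorphism of Proposition~\ref{spin} to pass from trees to spin configurations (and handling the boundary term $\vphi(x,X_0)$, which only adds a factor depending on the single coordinate $X_0$ and is therefore harmless, being automatically log-supermodular on a chain once one checks it, or absorbed since it involves only one site).

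The second step is the factorization: the density is a product over $i\in\La_n$ of three types of factors — the single-site weights $p_{X_i}$, the compatibility indicators $C_i(\chi):=1_{X_{a(i)}\ge r(i)}1_{X_i\ge0}+1_{X_{a(i)}<r(i)}1_{X_i<0}$ depending on the pair $(X_{a(i)},X_i)$, and the Boltzmann factors $e^{-\be\vphi(X_{a(i)},X_i)}$. Since a product of log-supermodular functions is log-supermodular and \eqref{fkgcond} is exactly the statement that $\nu$ is log-supermodular on the product lattice, it suffices to check log-supermodularity of each factor separately. Single-site factors $p_{X_i}$ are trivially log-supermodular (they depend on one coordinate; any function on a chain is supermodular in the degenerate sense needed, since for a single coordinate $\chi\vee\chi'$ and $\chi\we\chi'$ just permute $X_i,X_i'$). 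For the Boltzmann factors, log-supermodularity of $e^{-\be\vphi(X_{a(i)},X_i)}$ on the two-coordinate sublattice indexed by $\{a(i),i\}$ is precisely the hypothesis \eqref{phifkg} (with $\beta\ge0$): writing it out, $-\be[\vphi(X,Y)+\vphi(X',Y')]\ge-\be[\vphi(X\we X',Y\we Y')+\vphi(X\vee X',Y\vee Y')]$ is \eqref{phifkg} multiplied by $\be\ge0$.

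The main obstacle — the one genuinely model-specific point — is verifying that the compatibility indicator $C_i$ is log-supermodular as a $\{0,1\}$-valued function of the pair $(X_{a(i)},X_i)$, i.e. that
$$
C_i(X\vee X',Y\vee Y')\,C_i(X\we X',Y\we Y')\ \ge\ C_i(X,Y)\,C_i(X',Y')
$$
for all $X,X',Y,Y'\in\{-1,0,\dots,K\}$, where here $X,X'$ play the role of $X_{a(i)},X_{a(i)}'$ and $Y,Y'$ that of $X_i,X_i'$, and $r=r(i)$ is fixed. Since the left side is automatically $\ge0$, the only thing to rule out is: right side $=1$ while left side $=0$. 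So assume $C_i(X,Y)=C_i(X',Y')=1$. Each of these is an ``either/or'': either ($X\ge r$ and $Y\ge 0$) or ($X<r$ and $Y<0$), i.e. $Y=-1$. I would run through the (at most four) cases for which branch holds in each of $(X,Y)$ and $(X',Y')$. In the ``both $\ge$'' case, $X\we X'\ge r$, $X\vee X'\ge r$, $Y\we Y'\ge0$, $Y\vee Y'\ge0$, so both factors on the left equal $1$. In the ``both $<$'' case, $Y=Y'=-1$ so $Y\we Y'=Y\vee Y'=-1<0$, and $X,X'<r$ forces $X\we X'<r$ and $X\vee X'<r$, so again both left factors are $1$. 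In the mixed case, say $X\ge r$, $Y\ge0$, $X'<r$, $Y'=-1$: then $Y\we Y'=-1<0$ and $X\we X'=X'<r$, so $C_i$ at the meet is $1$ (the ``$<$'' branch); and $Y\vee Y'=Y\ge0$ with $X\vee X'=X\ge r$, so $C_i$ at the join is $1$ (the ``$\ge$'' branch). Hence the left side is $1$ in every case, which is exactly what is needed. Collecting the three pieces, $\nu$ is a product of log-supermodular factors, hence log-supermodular, hence \eqref{fkgcond} holds, and the Holley inequality / FKG theorem on a finite distributive lattice (as presented in \cite{FV17}) gives \eqref{fkg}; I would cite that theorem rather than reprove it. One small bookkeeping remark to include: strict positivity of $\nu$ on its support together with the lattice structure of the support (the support is itself a sublattice — this follows from the same case analysis showing $C_i$-compatibility is preserved under $\we$ and $\vee$) is what licenses applying Holley's theorem; I would state this explicitly so that the degenerate weight $p_{-1}=1$ and vanishing $p_k$'s cause no trouble.
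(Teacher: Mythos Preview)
Your proposal is correct and follows essentially the same route as the paper: both arguments reduce \eqref{fkg} to checking that the unnormalised density is log-supermodular on the product lattice, factorise into single-site weights, Boltzmann pair factors (handled by \eqref{phifkg}), and the BGW compatibility indicators, and then verify the latter via the inequality the paper records as \eqref{ppFKG}. The only presentational differences are that the paper states and proves the underlying four-functions lemma (Lemma~\ref{th2}) by induction on the number of sites rather than citing Holley/FKG from \cite{FV17}, and that you spell out the case analysis for $C_i$ which the paper leaves implicit.
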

\begin{remark}
Interaction (\ref{phi}) obeys (\ref{phifkg}), even without the non-negativity
assumption.
\end{remark}
\begin{lemma}\label{th2}
Let
\beq
A_N=\{-1,0,1,\dots,K\}^N
\eeq
and let $f_1,f_2,f_3,f_4:\,A_N\rightarrow\R_+$ be such that
\beq\label{f1234}
f_1(\chi)f_2(\chi')\le f_3(\chi\we\chi')f_4(\chi\vee\chi')\qquad\forall\chi,\chi'\in A_N
\eeq
Then, for any product measure $\mu=\otimes\mu_i$ on $A_N$,
\beq
\bra f_1\ket_\mu\bra f_2\ket_\mu\le\bra f_3\ket_\mu\bra f_4\ket_\mu
\eeq
\end{lemma}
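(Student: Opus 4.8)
The plan is to recognise Lemma~\ref{th2} as (the weighted form of) the Ahlswede--Daykin four-functions inequality for the finite distributive lattice $A_N=\{-1,0,1,\dots,K\}^N$, which is a product of $N$ copies of the chain $\{-1,\dots,K\}$, with $\we$ and $\vee$ acting coordinatewise. First I would remove the measure: writing $\mu(\chi)=\prod_i\mu_i(\chi_i)\ge0$ for the density of $\mu$ with respect to counting measure and setting $\ti f_m=f_m\,\mu$ for $m=1,\dots,4$, the key observation is that on each coordinate $\{\chi_i,\chi_i'\}=\{\chi_i\we\chi_i',\chi_i\vee\chi_i'\}$ as a multiset, so that $\mu(\chi)\mu(\chi')=\mu(\chi\we\chi')\mu(\chi\vee\chi')$. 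Hence hypothesis~(\ref{f1234}) for $f_1,\dots,f_4$ is equivalent to the same hypothesis for $\ti f_1,\dots,\ti f_4$, and $\bra f_m\ket_\mu=\sum_\chi\ti f_m(\chi)$ up to the common factor $\sum_\chi\mu(\chi)$, which cancels in the inequality. It therefore suffices to prove the counting-measure statement: if $g_1,\dots,g_4:A_N\to\R_+$ satisfy $g_1(\chi)g_2(\chi')\le g_3(\chi\we\chi')g_4(\chi\vee\chi')$ for all $\chi,\chi'$, then $\bigl(\sum_\chi g_1\bigr)\bigl(\sum_\chi g_2\bigr)\le\bigl(\sum_\chi g_3\bigr)\bigl(\sum_\chi g_4\bigr)$.

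I would establish this by induction, proving more generally the four-functions inequality for an arbitrary finite product of chains $L=\prod_{i=1}^r D_i$, by induction on the total rank $\rho(L)=\sum_{i=1}^r(|D_i|-1)$; the case $\rho=0$ (a single point) is trivial. The inductive step is the standard ``collapse one level'' reduction: pick a coordinate $k$ with $|D_k|\ge2$ and split its range into its minimal value $b$ and the rest; this writes $L=B\sqcup T$, where $B$ is the sublattice on which coordinate $k$ equals $b$ and $T$ is its complement, and $B$ lies ``below'' $T$ in the sense that $u\in B$, $v\in T$ imply $u\we v\in B$ and $u\vee v\in T$. Both $B\cong\prod_{i\neq k}D_i$ and $T$ are again products of chains, of strictly smaller total rank. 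Define pushed-forward functions $G_1,\dots,G_4$ on the two-point lattice $\{B<T\}$ by $G_m(B)=\sum_{u\in B}g_m(u)$ and $G_m(T)=\sum_{v\in T}g_m(v)$. Then the four hypotheses $G_1(s)G_2(s')\le G_3(s\we s')G_4(s\vee s')$, $s,s'\in\{B,T\}$, hold: the $(B,B)$ and $(T,T)$ instances are the inductive hypothesis applied to the restrictions of the $g_m$ to $B$ and to $T$; the two ``cross'' instances, after fixing the value of coordinate $k$ on each side, reduce to the inductive hypothesis on $\prod_{i\neq k}D_i$ together with~(\ref{f1234}) for the $g_m$. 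Plugging the $G_m$ into the two-point case of the four-functions inequality then gives the claim for $L$.

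The real content, and the step I expect to be the main obstacle, is the two-point base case $L=\{0,1\}$: given $a_0,a_1,b_0,b_1,c_0,c_1,d_0,d_1\ge0$ with $a_0b_0\le c_0d_0$, $a_1b_1\le c_1d_1$, $a_0b_1\le c_0d_1$ and $a_1b_0\le c_0d_1$, one must show $(a_0+a_1)(b_0+b_1)\le(c_0+c_1)(d_0+d_1)$. Here a term-by-term comparison fails, so the argument needs more care. The plan is to dispose first of the degenerate cases in which one of $b_0,b_1,c_0,c_1,d_0,d_1$ vanishes --- each of these follows in one or two lines from the four hypotheses --- so that one may assume all six strictly positive; then, dividing the hypotheses through by the appropriate $b_i$, run a short case analysis on the signs of $d_0b_1-d_1b_0$ and of $c_0b_1-c_1b_0$: in each of the four resulting cases the desired inequality follows by elementary algebra from the hypotheses and the case assumptions. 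This is the classical Ahlswede--Daykin argument, adapted (as indicated in~\cite{FV17}) to lattice gases allowing more than one particle per site.
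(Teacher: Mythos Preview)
Your approach is correct and is the standard Ahlswede--Daykin route: absorb the product measure into the functions (using $\mu(\chi)\mu(\chi')=\mu(\chi\we\chi')\mu(\chi\vee\chi')$ for product measures), then induct on the total rank of $L=\prod_i D_i$ by peeling off one extremal level at a time, reducing everything to the two-point four-functions inequality. The cross instances $G_1(B)G_2(T)\le G_3(B)G_4(T)$ indeed follow: for each fixed value $t$ of the $k$-th coordinate in $T$, apply the inductive hypothesis on $\prod_{i\ne k}D_i$ to $g_1(\cdot,b),g_2(\cdot,t),g_3(\cdot,b),g_4(\cdot,t)$, then sum over $t$; and $B$, $T$ are both products of chains of strictly smaller rank, so the $(B,B)$ and $(T,T)$ instances are covered by induction as well.

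The paper takes a different and somewhat slicker route. It keeps the measure and inducts on $N$ rather than on rank, integrating out one \emph{entire} coordinate at a time: setting $\ti f_j(\ti\chi)=\sum_x f_j(\ti\chi,x)\mu_N(x)$, it shows directly that the $\ti f_j$ again satisfy (\ref{f1234}) on $A_{N-1}$. The point is that the last coordinate is a \emph{chain}, so the double sum over $(x,y)$ symmetrises into pairs $x<y$, and for each such pair the elementary fact ``$a,b\le c$ and $ab\le cd$ with all four nonnegative imply $a+b\le c+d$'' (proof: if $c>0$ then $d\ge ab/c$ and $(c-a)(c-b)\ge0$) finishes in one line. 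Thus the whole chain is swallowed in one step, with no separate rank induction and no standalone two-point analysis; in effect the two-point case is embedded in the symmetrisation. Your approach, by contrast, is the one that generalises verbatim to arbitrary finite distributive lattices. Incidentally, your two-point base case can also be dispatched by the same one-line lemma (take $a=a_0b_1$, $b=a_1b_0$, $c=c_0d_1$, $d=c_1d_0$; then $a,b\le c$ by hypothesis and $ab=(a_0b_0)(a_1b_1)\le(c_0d_0)(c_1d_1)=cd$), which is cleaner than the case analysis you outline.
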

\begin{proof}
We first prove the theorem using the lemma with $A_N=\Om_n$, and then prove the
lemma. Let
\beq
h(\chi)=\prod_{i\in\La_n}\Bigl(1_{X_{a(i)}\ge r(i)}1_{X_i\ge0}+1_{X_{a(i)}<r(i)}1_{X_i<0}\Bigr)
\eeq
\begin{align}
f_1(\chi)=F(\chi)h(\chi)e^{-\be H^x(\chi)},\qquad &f_2(\chi)=G(\chi)h(\chi)e^{-\be H^x(\chi)},\cr
f_3(\chi)=h(\chi)e^{-\be H^x(\chi)},\qquad &f_4(\chi)=F(\chi)G(\chi)h(\chi)e^{-\be H^x(\chi)}
\end{align}
Applying the lemma gives the theorem. We must check (\ref{f1234}), which breaks
down into quadruples of factors and for which it suffices that every quadruple
obey the inequality. The quadruple containing $F$ and $G$ obeys the inequality
by virtue of the monotonicity of $F$ and $G$. The quadruples from the 2-body
interaction, embedded in exponentials, obey the inequality by hypothesis
(\ref{phifkg}). The Galton Watson quadruples obey the inequality because
$\forall\ r,X,Y,X',Y'\in\{-1,0,1,\dots,K\}$,
\begin{multline}\label{ppFKG}
\Bigl(1_{X\ge r}1_{Y\ge0}+1_{X<r}1_{Y<0}\Bigr)
\Bigl(1_{X'\ge r}1_{Y'\ge0}+1_{X'<r}1_{Y'<0}\Bigr)
\ \le \cr  
\Bigl(1_{X\wedge X'\ge r}1_{Y\wedge Y'\ge0}+1_{X\wedge X'<r}1_{Y\wedge Y'<0}\Bigr)
\Bigl(1_{X\vee X'\ge r}1_{Y\vee Y'\ge0}+1_{X\vee X'<r}1_{Y\vee Y'<0}\Bigr)
\end{multline}

Let us now prove the lemma. Let
\beq
\chi=(\ti\chi,x),\quad\chi'=(\ti\chi',y),\quad\ti\chi,\ti\chi'\in A_{N-1},\quad
x,y\in\{-1,\dots,K\}
\eeq
\beq
\ti f_j(\ti\chi)=\bra f_j(\ti\chi,\cdot)\ket_{\mu_N}
=\sum_{x=-1}^Kf_j(\ti\chi,x)\mu_N(x),\quad j=1,2,3,4
\eeq
We claim that the $\ti f_j$'s obey the hypothesis of the lemma, with $N-1$ in
place of $N$. Iterating $N$ times then proves the lemma. Alternatively
one can reason by induction, assuming the lemma up to $N-1$ and applying it to
the $\ti f_j$'s. In both cases there remains to prove the claim.
We start from the left-hand-side of (\ref{f1234}).
\begin{multline}\label{f1f2}
\ti f_1(\ti\chi)\ti f_2(\ti\chi')=
\bra f_1(\ti\chi,x)f_2(\ti\chi',y)\ket_{\mu_N\otimes\mu_N}\cr
=\bra 1_{x=y} f_1(\ti\chi,x)f_2(\ti\chi',y)\ket_{\mu_N\otimes\mu_N}\cr
+\bra 1_{x<y}\bigl[f_1(\ti\chi,x)f_2(\ti\chi',y)+f_1(\ti\chi,y)f_2(\ti\chi',x)\bigr]\ket_{\mu_N\otimes\mu_N}
\end{multline}
Given any $\chi,\chi'$ and $x<y$ let
\beqa
a=f_1(\ti\chi,x)f_2(\ti\chi',y),\qquad b=f_1(\ti\chi,y)f_2(\ti\chi',x)\cr
c=f_3(\ti\chi\wedge\ti\chi',x)f_4(\ti\chi\vee\ti\chi',y),\qquad
d=f_3(\ti\chi\wedge\ti\chi',y)f_4(\ti\chi\vee\ti\chi',x)
\eeqa
By hypothesis $a\le c$ and $b\le c$. Moreover
\beqa
ab=f_1(\ti\chi,x)f_2(\ti\chi',x)f_1(\ti\chi,y)f_2(\ti\chi',y)\le\cr
\le
f_3(\ti\chi\wedge\ti\chi',x)f_4(\ti\chi\vee\ti\chi',x)
f_3(\ti\chi\wedge\ti\chi',y)f_4(\ti\chi\vee\ti\chi',y)=cd
\eeqa
And $a,b\le c$ with $ab\le cd$ imply $a+b\le c+d$. 
The first term in (\ref{f1f2}) is bounded as
\beq
\bra 1_{x=y} f_1(\ti\chi,x)f_2(\ti\chi',y)\ket_{\mu_N\otimes\mu_N}
\le
\bra 1_{x=y} f_3(\ti\chi\wedge\ti\chi',x)f_4(\ti\chi\vee\ti\chi',y)\ket_{\mu_N\otimes\mu_N}
\eeq
The second term is bounded using $a+b\le c+d$:
\begin{multline}
\bra 1_{x<y}\bigl[f_1(\ti\chi,x)f_2(\ti\chi',y)+f_1(\ti\chi,y)f_2(\ti\chi',x)\bigr]\ket_{\mu_N\otimes\mu_N}\le\cr
\le
\bra 1_{x<y}\bigl[f_3(\ti\chi\wedge\ti\chi',x)f_4(\ti\chi\vee\ti\chi',y)
+f_3(\ti\chi\wedge\ti\chi',y)f_4(\ti\chi\vee\ti\chi',x)
\bigr]\ket_{\mu_N\otimes\mu_N}
\end{multline}
The proof of the lemma is now easily completed.
\end{proof}


\section{Recursion for generating functions}\label{recursion}
Let ${    N_n}$ be the number of {    external nodes} of an $n$-generation tree $\om_n$:
\beqa
{    N_n}(\om_n)&=&\sum_{i\in\om_n,\,|i|=n}X_{i_1\dots i_n}
\eeqa
Let ${    N_n}=L_n+Q_n$ where $L_n=L_n(\om_n)$ denotes the number of {    external nodes} 
whose parent has one {    offspring}, and $Q_n=Q_n(\om_n)$ denotes the 
number of {    external nodes} whose parent has two or more {    offspring}. For $u,v>0$ let
\beq\label{Xinuv}
\Xi^x_n(u,v)=\sum_{\om_n}\Pe^{GW}(\om_n)e^{-\be H^x(\om_n)}u^{L_n}v^{Q_n}
\eeq
with $H^x(\om_n)$ as (\ref{H})(\ref{22}), implying
$$
\Xi^x_n(u,v)=\Xi^2_n(u,v)\ \forall\ x\ge2
$$
The partition function
(\ref{Xin}) is $\Xi^x_n(1,1)$. Assume
\beq
e^{-\be \vphi(X,Y)}=e^\be=b>1\quad\forall\ X,Y\ge2\ ;\qquad
e^{-\be \vphi(X,Y)}=1\quad{\rm whenever}\ X\,{\rm or}\, Y\le1
\eeq
Then
\beqa\label{Xi01}
\left(\bmat\Xi^1_0(u,v)\\\Xi^2_0(u,v)\emat\right)
=\left(\bmat p_0+p_1u+p_2v^2+\dots+p_Kv^K\\p_0+p_1u+b\,p_2v^2+\dots+b\,p_Kv^K\emat\right)
\eeqa
\begin{theorem}
Let
\beqa\label{Fuv}
F(u,v)=\left(\begin{matrix}
p_0+p_1u+p_2v^2+\dots+p_Kv^K\\
p_0+p_1u+b\,p_2v^2+\dots+b\,p_Kv^K
\end{matrix}\right)
\eeqa
mapping $[1,\infty)\times[1,\infty)$ into itself and more precisely into
$\{1\leq u<v\}$. Then for $n\ge0$
\beq\label{XinFn}
\Xi^x_{n+1}(u,v)=\Xi^x_n(F(u,v))=\dots=\Xi^x_0\Bigl(F^{(n+1)}(u,v)
\Bigr)
\eeq
(Recursion from the {    external nodes}), and
\begin{align}
\Xi^1_{n+1}(u,v)&=p_0+p_1\Xi^1_n(u,v)+p_2\Bigl(\Xi^2_n(u,v)\Bigr)^2
+\dots+p_K\Bigl(\Xi^2_n(u,v)\Bigr)^K\cr
\Xi^2_{n+1}(u,v)&=p_0+p_1\Xi^1_n(u,v)+b\,p_2\Bigl(\Xi^2_n(u,v)\Bigr)^2
+\dots+b\,p_K\Bigl(\Xi^2_n(u,v)\Bigr)^K\cr
&\hskip 5cm\hbox{(Recursion from the root)}\label{fromroot}
\end{align}
Denote $(u_n(u,v),v_n(u,v))=F^{(n)}(u,v)$. Then
\beq\label{Xiuv}
\Xi^1_n(u,v)=u_{n+1}(u,v),\qquad \Xi^2_n(u,v)=v_{n+1}(u,v)
\eeq
When the arguments are not given, we implicitly assume $u=v=1$, so that
$(u_0,v_0)=(1,1)$ and
\beq
u_{n+1}=u_{n+1}(1,1)=\Xi_n^1(1,1)\,,\qquad v_{n+1}=v_{n+1}(1,1)=\Xi_n^2(1,1) 
\eeq
\beq\label{defF}
\left(\bmat u_{n+1}\cr v_{n+1}\cr\emat\right)=\left(\bmat p_0+p_1u_n+p_2v_n^2
+\dots+p_Kv_n^K\cr
p_0+p_1u_n+b\,p_2v_n^2+\dots+b\,p_Kv_n^K\cr\emat\right)
\eeq
\end{theorem}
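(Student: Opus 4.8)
The plan is to read off both recursions from two complementary one-step branching decompositions of an $(n+1)$-generation tree --- peeling off the last generation for (\ref{XinFn}) and splitting at the root for (\ref{fromroot}) --- to compute the base case $\Xi^x_0$ directly, to iterate, and to verify the mapping property of $F$ by elementary estimates. The base case is immediate: a tree with at most $0$ generations is a single root carrying $X_0=k\in\{0,\dots,K\}$, with weight $\Pe^{GW}=p_k$, energy $H^x=\vphi(x,k)=-1_{x\ge2}1_{k\ge2}$, and external nodes the $k$ virtual children of the root, so that $\Xi^x_0(u,v)=\sum_{k=0}^K p_k\,e^{-\be\vphi(x,k)}\,u^{1_{k=1}}v^{k1_{k\ge2}}$; with $e^{-\be\vphi(1,k)}=1$ and $e^{-\be\vphi(x,k)}=b^{1_{k\ge2}}$ for $x\ge2$ this equals $F^1(u,v)$ for $x=1$ and $F^2(u,v)$ for $x\ge2$ (writing $F^1,F^2$ for the two components of $F$), i.e. (\ref{Xi01}). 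For the recursion from the external nodes I would write any $\om_{n+1}$ with at most $n+1$ generations uniquely as the pair $(\om_n,(z_j)_j)$, where $\om_n$ is its truncation to generations $\le n$ and $z_j\in\{0,\dots,K\}$ is the offspring count of each generation-$(n+1)$ node $j\in\om_{n+1}$. Under this bijection $\Pe^{GW}(\om_{n+1})=\Pe^{GW}(\om_n)\prod_j p_{z_j}$; the energy splits as $H^x(\om_{n+1})=H^x(\om_n)+\sum_j\vphi(X_{a(j)},z_j)$, since the parents and offspring counts of all nodes at generations $\le n$ --- hence all energy terms $\vphi(X_{a(i)},X_i)$ with $|i|\le n$ --- are unchanged by the truncation; and $u^{L_{n+1}}v^{Q_{n+1}}=\prod_j u^{1_{z_j=1}}v^{z_j1_{z_j\ge2}}$. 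For a fixed generation-$n$ node $i\in\om_n$ carrying $X_i=y$ children, summing independently over the offspring counts of those $y$ children produces the factor $\bigl(\sum_{z=0}^K p_z\,e^{-\be\vphi(y,z)}\,u^{1_{z=1}}v^{z1_{z\ge2}}\bigr)^y$, which equals $F^1(u,v)^y$ when $y\le1$ and $F^2(u,v)^y$ when $y\ge2$, because $\vphi(y,\cdot)\equiv0$ for $y\le1$ and $e^{-\be\vphi(y,z)}=b^{1_{z\ge2}}$ for $y\ge2$. Multiplying over all generation-$n$ nodes of $\om_n$ collapses this to $F^1(u,v)^{L_n(\om_n)}F^2(u,v)^{Q_n(\om_n)}$, and summing over $\om_n$ gives $\Xi^x_{n+1}(u,v)=\Xi^x_n\bigl(F(u,v)\bigr)$.

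Granting the mapping property below, iterating this identity and inserting the base case yields $\Xi^x_{n+1}(u,v)=\Xi^x_0\bigl(F^{(n+1)}(u,v)\bigr)$, which is the chain (\ref{XinFn}); for $x\in\{1,2\}$ it reads $\Xi^x_n(u,v)=F^x\bigl(F^{(n)}(u,v)\bigr)$, the $x$-th component of $F^{(n+1)}(u,v)$, so $\Xi^1_n=u_{n+1}$ and $\Xi^2_n=v_{n+1}$ (equation (\ref{Xiuv})), and at $u=v=1$, with $(u_0,v_0)=F^{(0)}(1,1)=(1,1)$, one obtains (\ref{defF}). The recursion from the root (\ref{fromroot}) then follows by reading $\Xi^x_{n+1}(u,v)=F^x\bigl(F^{(n+1)}(u,v)\bigr)=F^x\bigl(\Xi^1_n(u,v),\Xi^2_n(u,v)\bigr)$ componentwise; alternatively one derives it directly by splitting $\om_{n+1}$ over the root offspring $X_0=k$ --- the root contributes weight $p_k$ and energy $\vphi(x,k)$ and carries $k$ subtrees, each of at most $n$ generations and each under boundary condition $k$, so that each contributes $\Xi^1_n$ if $k=1$ and $\Xi^2_n$ if $k\ge2$ (using $\Xi^k_n=\Xi^2_n$ for $k\ge2$), the counts $L_{n+1},Q_{n+1}$ being additive over the subtrees and $e^{-\be\vphi(x,k)}$ equal to $b$ only when $x\ge2$ and $k\ge2$. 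Finally, the mapping property: for $u,v\ge1$ every monomial $p_k u$ or $p_k v^k$ is at least $p_k$, so $F^1(u,v)\ge\sum_{k=0}^Kp_k=1$ and likewise $F^2(u,v)\ge1$, while $F^2(u,v)-F^1(u,v)=(b-1)\sum_{k=2}^Kp_kv^k\ge(b-1)p_K>0$ because $b>1$ and $p_K>0$; hence $F$ maps $[1,\infty)^2$ into $\{1\le u<v\}$.

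The one step deserving real care is the external-node recursion. One must check that the truncation is a genuine weight-preserving bijection onto pairs consisting of a tree with at most $n$ generations and the list of offspring counts of its new leaves, and, above all, that summing out those free counts reproduces exactly the two rows of $F$. The mechanism is that the interaction $\vphi(X,Y)=-1_{X\ge2}1_{Y\ge2}$ ignores a parent with only $0$ or $1$ child --- no Boltzmann factor at all, which is why a single offspring enters with the plain activity $u$ --- whereas a parent with $y\ge2$ children picks up one factor $b$ for each of those children that itself has $\ge2$ offspring, turning the generating polynomial of its $y$ children into the $y$-th power of the $b$-weighted row $F^2$; this is precisely the dichotomy built into $F=(F^1,F^2)$. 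The remaining ingredients --- the base case, the iteration, the root split, and the estimates for $F$ --- are routine.
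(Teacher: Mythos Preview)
Your proof is correct and follows essentially the same route as the paper: the paper also decomposes $\om_{n+1}$ by peeling off the last generation, uses the energy split $H^x(\om_{n+1})=H^x(\om_n)+\sum_{i\in N_n}\vphi(X_{a(i)},X_i)$, and collapses the sum over the new offspring counts into the factors $(F^1)^{L_n}(F^2)^{Q_n}$, then declares (\ref{fromroot}) ``straightforward'' and deduces (\ref{Xiuv}) from (\ref{XinFn}) and (\ref{Xi01}). You have simply filled in the base case, the mapping property of $F$, and the root-split derivation of (\ref{fromroot}) in more detail than the paper does.
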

\begin{proof}
Let $\om_n\subset\om_{n+1}$, and let $i\in {    N_n}$ mean $i$ an {    external node} of
$\om_n$, and similarly $i\in L_n$ and  $i\in Q_n$. Then
\beq
H^x(\om_{n+1})=H^x(\om_n)+\sum_{i\in {    N_n}}\vphi(X_{a(i)},X_i)
\eeq
\beqa\label{ruv}
\Xi^x_{n+1}(u,v)=&\sum_{\om_n}\Pe^{GW}(\om_n)e^{-\be H^x(\om_n)}
\hskip-0.17cm\prod_{i\in L_n}(p_0+p_1u+p_2v^2+\dots+p_Kv^K).\cr
&\hskip4cm.\prod_{i\in Q_n}(p_0+p_1u+b\,p_2v^2+\dots+b\,p_Kv^K)\cr
&\hskip-2.5cm=\sum_{\om_n}\Pe^{GW}(\om_n)e^{-\be H^x(\om_n)}
(p_0+p_1u+p_2v^2+\dots+p_Kv^K)^{L_n}.\cr
&\hskip4cm.(p_0+p_1u+b\,p_2v^2+\dots+b\,p_Kv^K)^{Q_n}\cr
=&\Xi^x_n\bigl(p_0+p_1u+p_2v^2+\dots+p_Kv^K,\,p_0+p_1u+b\,p_2v^2+\dots+b\,p_Kv^K\bigr)
\eeqa
proving (\ref{XinFn}). Relation (\ref{fromroot}) is straightforward, and
(\ref{Xiuv}) follows from (\ref{XinFn}) and (\ref{Xi01}).
\end{proof}
The recursion relation (\ref{defF}) means that the original statistical
mechanics problem has been reduced to a discrete time dynamical system in
$\R^2$, for which more efficient mathematical and numerical tools are available.
While the Monte Carlo simulation is limited to trees with $n$ about a hundred
generations, with the usual statistical errors, one can easily run
``exactly'' the dynamical system up to millions of generations, as shown on
Fig. \ref{Zn.png}.
It is also clear from the proof that for $b\ne1$, there is not a recursion
relation involving the diagonal $u=v$ alone.

We have lost the Markov chain of the noninteracting BGW model with $n$ as time.
Nevertheless we have gained a dynamical system with $n$ as time.

We shall also use the Fréchet derivative of the map $F$ defined in (\ref{Fuv}):
\beq\label{DF}
DF(u,v)=\left(\bmat 
p_1&2p_2v+\dots+Kp_Kv^{K-1}\cr
p_1&2b\,p_2v+\dots+Kb\,p_Kv^{K-1}
\emat\right)
\eeq

\section{Number of {    external nodes}}\label{leaves}
From (\ref{Xinuv}),
$$
\bra L_n\ket^x={\p\log\Xi^x_n(u,v)\over\p u}\Big|_{u=v=1},\qquad
\bra Q_n\ket^x={\p\log\Xi^x_n(u,v)\over\p v}\Big|_{u=v=1}
$$
\beq\label{Ln1}
\bra L_n\ket^1={\p\log u_{n+1}(u,v)\over\p u}\Big|_{u=v=1}
={1\over u_{n+1}}{\p u_{n+1}\over\p u}\Big|_{u=v=1}
\eeq
\beq
\bra L_n\ket^2={\p\log v_{n+1}(u,v)\over\p u}\Big|_{u=v=1}
={1\over v_{n+1}}{\p v_{n+1}\over\p u}\Big|_{u=v=1}
\eeq
\beq
\bra Q_n\ket^1={\p\log u_{n+1}(u,v)\over\p v}\Big|_{u=v=1}
={1\over u_{n+1}}{\p u_{n+1}\over\p v}\Big|_{u=v=1}
\eeq
\beq\label{Qn2}
\bra Q_n\ket^2={\p\log v_{n+1}(u,v)\over\p v}\Big|_{u=v=1}
={1\over v_{n+1}}{\p v_{n+1}\over\p v}\Big|_{u=v=1}
\eeq
$u_n,v_n$ and $\p u_n/\p u,\p u_n/\p v,\p v_n/\p u,\p v_n/\p v$ at $u=v=1$
can be computed by induction using
\beq\label{dudvnn}
\left(\bmat u_{n+1}\cr v_{n+1}\cr \p u_{n+1}/\p u\cr \p v_{n+1}/\p u\cr \p u_{n+1}/\p v\cr \p v_{n+1}/\p v\cr\emat\right)_{u=v=1}
=\left(\bmat p_0+p_1u_n+p_2v_n^2+\dots+p_Kv_n^K\cr
p_0+p_1u_n+b\,p_2v_n^2+\dots+b\,p_Kv_n^K\cr
p_1\p u_n/\p u+2p_2v_n\p v_n/\p u+\dots+Kp_Kv_n^{K-1}\p v_n/\p u\cr
p_1\p u_n/\p u+2b\,p_2v_n\p v_n/\p u+\dots+Kb\,p_Kv_n^{K-1}\p v_n/\p u\cr
p_1\p u_n/\p v+2p_2v_n\p v_n/\p v+\dots+Kp_Kv_n^{K-1}\p v_n/\p v\cr
p_1\p u_n/\p v+2b\,p_2v_n\p v_n/\p v+\dots+Kb\,p_Kv_n^{K-1}\p v_n/\p v
\emat\right)_{u=v=1}
\eeq
or, using (\ref{DF}),
\beq
\left(\bmat \p u_{n+1}/\p u\cr \p v_{n+1}/\p u\emat\right)_{u=v=1}
=DF(u_n,v_n)\left(\bmat \p u_n/\p u\cr \p v_n/\p u\emat\right)_{u=v=1}
\eeq
\beq
\left(\bmat \p u_{n+1}/\p v\cr \p v_{n+1}/\p v\emat\right)_{u=v=1}
=DF(u_n,v_n)\left(\bmat \p u_n/\p v\cr \p v_n/\p v\emat\right)_{u=v=1}
\eeq
It is worth noting that $u_n$ and $v_n$ are mutually coupled but independent
of $\p u_n/\p u$, $\p v_n/\p u$, $\p u_n/\p v$, $\p v_n/\p v$ at $u=v=1$, while
the latter depend upon $u_n,v_n$. The recursion starts with
\beq
\left(\bmat u_0\cr v_0\cr \p u_0/\p u\cr \p v_0/\p u\cr \p u_0/\p v\cr \p v_0/\p v\cr\emat\right)_{u=v=1}
=\left(\bmat 1\cr 1\cr 1\cr 0\cr 0\cr 1\emat\right),\qquad
\left(\bmat u_1\cr v_1\cr \p u_1/\p u\cr \p v_1/\p u\cr \p u_1/\p v\cr \p v_1/\p v\cr\emat\right)_{u=v=1}
=\left(\bmat 1\cr p_0+p_1+b\,p_2 +\dots+b\,p_K\cr p_1\cr p_1\cr 2p_2+\dots+Kp_K\cr 2b\,p_2+\dots+Kb\,p_K\emat\right)
\eeq
Using (\ref{Ln1})-(\ref{Qn2}) yields 
\beq
\bra L_0\ket^1=p_1,\qquad \bra L_0\ket^2={p_1\over p_0+p_1+b\,p_2+\dots+b\,p_K}
\eeq
\beq
\bra Q_0\ket^1=2p_2+\dots+Kp_K,\qquad \bra Q_0\ket^2={2b\,p_2+\dots+Kb\,p_K\over p_0+p_1+b\,p_2+\dots+b\,p_K}
\eeq
We have checked $\bra L_n\ket^x$ and $\bra Q_n\ket^x$ for $n=0,1,2$ and $x=1,2$
 and $K=2,3$ as above against the Monte Carlo algorithm.
The mean total number of {    external nodes} in a tree with at most $n$ generations,
with boundary condition 1 or 2, is given by
\beq\label{Zn12}
\bra {    N_n}\ket^1
={1\over u_{n+1}}\Bigl({\p u_{n+1}\over\p u}+{\p u_{n+1}\over\p v}\Bigr)_{u=v=1}\ ;\qquad
\bra {    N_n}\ket^2
={1\over v_{n+1}}\Bigl({\p v_{n+1}\over\p u}+{\p v_{n+1}\over\p v}\Bigr)_{u=v=1}
\eeq
\begin{figure}
\begin{center}
\resizebox{9cm}{!}{\includegraphics{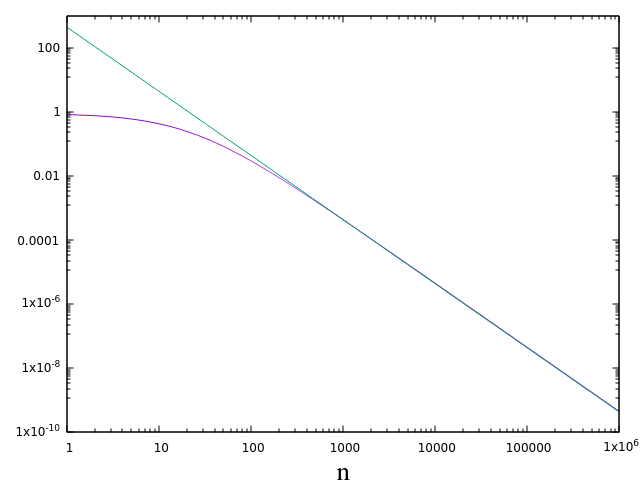}}
\caption{Mean number of {    external nodes} $\bra {    N_n}\ket^1$ and $n^{-2}$ fit for $K=2$,
$p_0=0.4$, $p_1=p_2=0.3$ and $\be=\be_c$ given by equality in (\ref{bec}).}
\label{Zn.png}
\end{center}
\end{figure}
Equation (\ref{Zn12}) with the recursion (\ref{dudvnn}) was used to generate
Fig. \ref{Zn.png}, further discussed in Section \ref{fixed}.

The variances can also be computed by induction, at the expense of four more
dimensions to accommodate
$\p^2 u_n/\p u^2,\p^2 v_n/\p u^2,\p^2 u_n/\p v^2,\p^2 v_n/\p v^2$. We give more
detail below for the variance of the energy.


\section{Interaction energy}\label{energy}
Let us write (\ref{H})(\ref{Pn})(\ref{Xin})(\ref{22}) as
\beq\label{Pn2}
\Pe^x_n(\om)=(\Xi^x_n)^{-1}\Pe^{GW}(\om)b^{N_{2{   2}}}
\eeq 
with
\beq\label{Xin2}
\Xi^x_n=\sum_{|\om|\le n}\Pe^{GW}(\om)b^{N_{2{   2}}}
\eeq
where $N_{2{   2}}=N_{2{   2}}(\om)$ is the number of favoured links: $X_i\ge2$ and
$X_{a(i)}\ge2$. It depends upon the boundary condition $x=X_{a(0)}$. Then
\beq
\bra N_{2{   2}}\ket_n^x=(\Xi^x_n)^{-1}b\,{d\over db}\Xi^x_n
\eeq
or, using (\ref{Xiuv}),
\beq\label{N2uv}
\bra N_{2{   2}}\ket_n^1={b\over u_{n+1}}\,{du_{n+1}\over db}\,,\qquad
\bra N_{2{   2}}\ket_n^2={b\over v_{n+1}}\,{dv_{n+1}\over db}
\eeq
The relation to energy is given by
\beq
e^{-\be H^x(\om)}=b^{N_{2{   2}}}\ \Rightarrow\ H^x(\om)=-{\log b\over\be}N_{2{   2}}
\eeq
The quantities (\ref{N2uv}) can be computed by recursion:
\beq\label{uvb}
\left(\bmat u_{n+1}\cr v_{n+1}\cr du_{n+1}/db\cr dv_{n+1}/db\emat\right)
=\left(\bmat p_0+p_1u_n+p_2v_n^2+\dots+p_Kv_n^K\cr
p_0+p_1u_n+b\,p_2v_n^2+\dots+b\,p_Kv_n^K\cr
p_1{du_n\over db}+2p_2v_n{dv_n\over db}+\dots+Kp_Kv_n^{K-1}{dv_n\over db}\cr
p_1{du_n\over db}+2bp_2v_n{dv_n\over db}+...+Kbp_Kv_n^{K-1}{dv_n\over db}\cr
\hfill+p_2v_n^2+\dots+p_Kv_n^K\emat\right)
\eeq
starting from (1,1,0,0) at $n=0$. Or, using (\ref{DF}),
\beq
\left(\bmat du_{n+1}/db\cr dv_{n+1}/db\emat\right)
=DF(u_n,v_n)\left(\bmat du_n/db\cr dv_n/db\emat\right)
\eeq
In particular
\beq
\left(\bmat u_1\cr v_1\cr du_1/db\cr dv_1/db\emat\right)
=\left(\bmat 1\cr
p_0+p_1+b\,p_2+\dots+b\,p_K\cr
0\cr
p_2+\dots+p_K\emat\right)
\eeq
\beq
\bra N_{2{   2}}\ket_0^1=0,\quad\bra N_{2{   2}}\ket_0^2={b\,(p_2+\dots+p_K)\over p_0+p_1+b\,(p_2+\dots+p_K)}
\eeq
We have checked $\bra N_{2{   2}}\ket_n^x$ for $n=0,1,2$ and $x=1,2$ and $K=2,3$ as above against
the Monte Carlo simulation with the algorithm of Section \ref{mc}.

The algorithm can be extended to the computation of variances:
\beq
\bra (N_{2{   2}})^2\ket_n^x-(\bra N_{2{   2}}\ket_n^x)^2=(\Xi^x_n)^{-1}b^2\,{d^2\over db^2}\Xi^x_n
+\bra N_{2{   2}}\ket_n^x-(\bra N_{2{   2}}\ket_n^x)^2
\eeq
\beq
\bra (N_{2{   2}})^2\ket_n^1-(\bra N_{2{   2}}\ket_n^1)^2={b^2\over u_{n+1}}\,{d^2u_{n+1}\over db^2}
+\bra N_{2{   2}}\ket_n^1-(\bra N_{2{   2}}\ket_n^1)^2
\eeq
\beq
\bra (N_{2{   2}})^2\ket_n^2-(\bra N_{2{   2}}\ket_n^2)^2={b^2\over v_{n+1}}\,{d^2v_{n+1}\over db^2}
+\bra N_{2{   2}}\ket_n^2-(\bra N_{2{   2}}\ket_n^2)^2
\eeq
where the second derivatives are also obtained recursively, adding to (\ref{uvb}) the following two lines:
\begin{align}\label{uvb2}
{d^2u_{n+1}\over db^2}=&
p_1{d^2u_n\over db^2}+2p_2v_n{d^2v_n\over db^2}+2p_2\Bigl({dv_n\over db}\Bigr)^2
+\dots\cr
&+Kp_Kv_n^{K-1}{d^2v_n\over db^2}+K(K-1)p_Kv_n^{K-2}\Bigl({dv_n\over db}\Bigr)^2\cr
{d^2v_{n+1}\over db^2}=&
p_1{d^2u_n\over db^2}+b\,\Bigl[2p_2v_n{d^2v_n\over db^2}+2p_2\Bigl({dv_n\over db}\Bigr)^2
+\dots\cr
&+Kp_Kv_n^{K-1}{d^2v_n\over db^2}+K(K-1)p_Kv_n^{K-2}\Bigl({dv_n\over db}\Bigr)^2\Bigr]
\end{align}
and starting the map in $\R^6$ from $(1,1,0,0,0,0)$.

\section{Fixed point and phase diagram}\label{fixed}
Let $K=2$. A fixed point $(u,v)$ for (\ref{defF}) reads
\beq\label{Pfp}
u={p_0+p_2v^2\over p_0+p_2}\,,\qquad v^2\Bigl(p_2(b-1)+{p_2\over p_0+p_2}\Bigr)-v
+{p_0\over p_0+p_2}=0
\eeq
which has a real solution if and only if
$$
b-1\le{p_0+p_2\over4p_0p_2}-{1\over p_0+p_2}
$$
\beq\label{bec}
\beta\le\log\Bigl(1+{p_0+p_2\over4p_0p_2}-{1\over p_0+p_2}\Bigr)=\be_c(p_0,p_2)
\eeq
See Fig. \ref{diag}.
For $b\ge1$ the system $(u,v)$, started at $(1,1)$, may only go to a fixed point
in $\{v\ge u\ge1\}$. The first equation in (\ref{Pfp}) will give a suitable
$u$ if $v\ge1$ has been found. We can therefore restrict our attention to the
second equation which may be written as
\beq\label{Pv}
P(v)=v^2p_2(b-1)+{p_2\over p_0+p_2}(v-1)(v-{p_0\over p_2})=0
\eeq
If $p_0<p_2$, corresponding to a free supercritical BGW, $P(v)$ cannot vanish and there is no fixed point
for the interacting system, where $v>1$. This is consistent with Griffiths
inequalities, Theorem \ref{Gn}, implying that $b>1$ increases the mean number of
{    external nodes}, reinforcing supercriticality.
\begin{figure}
\begin{center}
\resizebox{9cm}{!}{\includegraphics{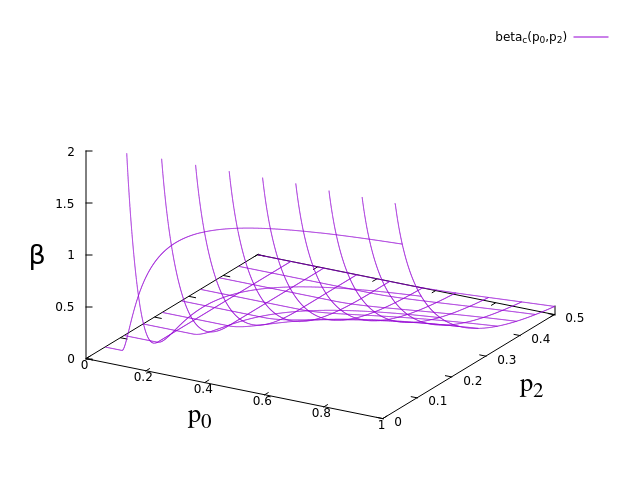}}
\caption{Critical surface given by equality in (\ref{bec}). The range of $(p_0, p_2)$ to consider is the triangle $p_2\ge0,\,p_2+p_0\le1,\,p_2-p_0\le0$. The system is expected to be subcritical below the critical surface, going to a fixed point as $n\nea\infty$, and supercritical above the critical surface, going to infinity as $n\nea\infty$.}
\label{diag}
\end{center}
\end{figure}
When $\be=\be_c(p_0,p_2)$, the fixed point associated with the double root is
\beq\label{uvfp}
v={2p_0\over p_0+p_2}\,;\qquad u={p_0\over p_0+p_2}+{4p_0^2p_2\over(p_0+p_2)^3}
\eeq

{    The phase diagram is linked to the behaviour of the free energy as $n\nea\infty$, depending upon $b$.} For boundary condition $x=$ 1 or 2, using (\ref{Lan})(\ref{Xiuv}), we find
\beq\label{psix}
\psi_n^x(b)=-{1\over|\La_n|}\log\Xi_n^x=-{1\over 2^{n+1}-1}\log\Xi_n^x
\eeq
\beq\label{psi}
\psi_n^1(b)=-{1\over 2^{n+1}-1}\log u_{n+1}\,,\qquad
\psi_n^2(b)=-{1\over 2^{n+1}-1}\log v_{n+1}
\eeq
Clearly, whenever there is convergence to a fixed point, the free energy density
(\ref{psix})(\ref{psi}) vanishes in the limit $n\nea\infty$.

When the underlying BGW model is critical or supercritical,
$\bar k=p_1+2p_2\ge 1$, the interacting model at $\be>0$ is supercritical and a
corresponding order parameter may be $\rho(b)$ from Theorem \ref{rho} below.
When the underlying BGW model is subcritical, $p_1+2p_2<1$,
there is $\be_c(p_0,p_2)$ such that the model is supercritical for $\be>\be_c$
and subcritical for $\be<\be_c$. Corresponding order parameters may be
\beq
\lim_{n\nea\infty}2^{-m}\Bigl\bra\sum_{|i|=m\atop i\in\om}X_i\Bigr\ket_n^x
\eeq
or
\beq
\lim_{n\nea\infty}2^{-(n-m)}\Bigl\bra\sum_{|i|=n-m\atop i\in\om}X_i\Bigr\ket_n^x
\eeq

An example for the mean number of {    external nodes} on the critical line is
shown on Fig.~\ref{Zn.png}. The behaviour $\bra {    N_n}\ket^1\sim n^{-2}$ as
$n\to\infty$ differs from the free
critical BGW where $\bra {    N_n}\ket_{GW}=\bar k^n=1$. 
{    A mathematical proof of this behaviour will be given in a forthcoming
paper \cite{CDHM22}.}
\section{Thermodynamic limit in the supercritical case.}\label{p_1p_2}
We give some results for the simplest interacting supercritical BGW model,
namely a model with $K=2$ and no extinction so that $p_1+p_2=1$, and interaction
(\ref{22}) as before.
\begin{theorem}\label{rho}
Let $p_1+p_2=1$ and $0<p_1<1$ and $b>1$. Recall $(u_n,v_n)=F^{(n)}(1,1)$ with
$F(\cdot)$ as (\ref{Fuv}). Then
\item{(i)} $u_n\to\infty,\,v_n\to\infty$, $v_n/u_n\to b$ as $n\to\infty$
\item{(ii)}
$(p_2bv_n)^{2^{-n}}$ strictly increases with $n$ for $n\ge0$.  
\item{(iii)}
$\rho(b)=\lim_{n\nea\infty}(v_n)^{2^{-n}}=\lim_{n\nea\infty}(u_n)^{2^{-n}}$ exists and is
non decreasing in b. 
\item{(iv)}
 $\rho(b)=\sup_n(p_2bv_n)^{2^{-n}}>1$
\item{(v)}
$\sqrt{p_2b(p_1+p_2b)}\le\rho(b)\le p_1+p_2b$. 
\item{(vi)}
$\psi(b)=\lim_{n\nea\infty}\psi_n^1(b)=\lim_{n\nea\infty}\psi_n^2(b)=-\log(\rho(b))$
exists and is non increasing in b. 
\end{theorem}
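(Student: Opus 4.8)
The plan is to specialize the dynamical system and then extract all six assertions from a single scalar reformulation. Since $K=2$ together with $p_1+p_2=1$ forces $p_0=0$, the recursion (\ref{defF}) reads $u_{n+1}=p_1u_n+p_2v_n^2$, $v_{n+1}=p_1u_n+bp_2v_n^2$, started at $(u_0,v_0)=(1,1)$. First I would record the elementary monotonicity facts on which everything rests: $v_{n+1}-u_{n+1}=(b-1)p_2v_n^2\ge 0$, so $v_n\ge u_n$; both sequences are nondecreasing, with $u_n\ge u_0=1$ and $v_n\ge v_1=p_1+bp_2>1$; and $u_n\le v_n\le v_n^2$. For (i) I would argue by contradiction: if $v_n$ stayed bounded, then by monotonicity $(u_n,v_n)$ would converge to a fixed point $(u^*,v^*)$; the first fixed-point equation gives $u^*=(v^*)^2$ (using $1-p_1=p_2$), and the second then forces $v^*=1/(p_1+bp_2)<1$, contradicting $v^*\ge v_1>1$; hence $v_n\to\infty$ and $u_n\to\infty$. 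The ratio claim follows by writing $v_{n+1}/u_{n+1}=(p_1u_n/v_n^2+bp_2)/(p_1u_n/v_n^2+p_2)$ and using $u_n/v_n^2\le 1/v_n\to 0$.

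The engine for the rest is the substitution $w_n=p_2bv_n$, under which the $v$-recursion becomes the perturbed quadratic map $w_{n+1}=w_n^2+p_1p_2b\,u_n$ (echoing the Collet--Eckmann--Glaser--Martin map \cite{cegm84} used by Derrida--Retaux \cite{dr14}). Since $u_n\ge 1$, this gives $w_{n+1}>w_n^2$, hence $w_{n+1}^{2^{-(n+1)}}>w_n^{2^{-n}}$, which is (ii). For existence of $\rho(b)$ in (iii) I would write $\log(v_n^{2^{-n}})$ as the telescoping series $\sum_{k\ge 0}2^{-(k+1)}\log(bp_2+p_1u_k/v_k^2)$; by the bounds $bp_2\le bp_2+p_1u_k/v_k^2\le p_1+bp_2$ each increment is $O(2^{-k})$, so the series converges absolutely and $\rho(b)=\lim v_n^{2^{-n}}$ exists; then $u_n^{2^{-n}}=v_n^{2^{-n}}(u_n/v_n)^{2^{-n}}$ with $u_n/v_n\to 1/b$ bounded away from $0$ gives $\lim u_n^{2^{-n}}=\rho(b)$ as well. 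Monotonicity of $\rho$ in $b$ I would obtain by comparing the systems for $b\le b'$: an induction using that the right-hand sides are nondecreasing in $u_n,v_n$ and in $b$ yields $u_n^{(b)}\le u_n^{(b')}$ and $v_n^{(b)}\le v_n^{(b')}$ for all $n$.

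For (iv), $w_n^{2^{-n}}$ is increasing by (ii), so its supremum equals its limit, which is $\lim(p_2b)^{2^{-n}}v_n^{2^{-n}}=\rho(b)$; and $\rho(b)>1$ because $w_n\to\infty$, so choosing $N$ with $w_N\ge 2$ and iterating $w_{n+1}\ge w_n^2$ yields $w_n\ge 2^{2^{n-N}}$ and hence $\rho(b)\ge 2^{2^{-N}}>1$. For (v), the upper bound comes from $v_{n+1}\le(p_1+bp_2)v_n^2$ iterated to $v_n\le(p_1+bp_2)^{2^n-1}$; the lower bound from $v_{n+1}\ge bp_2v_n^2$ ($n\ge 0$) iterated starting from the exact value $v_1=p_1+bp_2$, which gives $v_n\ge(bp_2)^{2^{n-1}-1}(p_1+bp_2)^{2^{n-1}}$ for $n\ge 1$ and hence $\rho(b)\ge\sqrt{bp_2(p_1+bp_2)}$. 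Finally (vi) is immediate: from (\ref{psi}), $\psi_n^1(b)=-\tfrac{2^{n+1}}{2^{n+1}-1}\cdot\tfrac{\log u_{n+1}}{2^{n+1}}\to-\log\rho(b)$ and similarly $\psi_n^2(b)\to-\log\rho(b)$, while $b\mapsto-\log\rho(b)$ is nonincreasing by (iii).

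The step I expect to be the main obstacle is (i): divergence of the iteration is what makes the correction term $p_1p_2b\,u_n$ in the $w$-map negligible (giving $u_n/v_n^2\to 0$), what forces the $u$- and $v$-limits to coincide, and what pins the asymptotic ratio to $b$, so the fixed-point exclusion has to be made airtight. The one genuinely non-routine estimate afterward is reaching the constant $\sqrt{bp_2(p_1+bp_2)}$ in (v), which needs the crude inequality $v_{n+1}\ge bp_2v_n^2$ combined with the exact first-step value $v_1=p_1+bp_2$ rather than either one on its own.
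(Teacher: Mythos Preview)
Your proof is correct and matches the paper's in parts (ii), (iv), (v), (vi) almost line for line (including the key substitution $w_n=p_2bv_n$ and the iteration from the exact value $v_1=p_1+bp_2$ for the lower bound). Two steps differ. For (i), the paper compares the nonlinear recursion to the linear one obtained by replacing $v_n^2$ by $v_n$, whose matrix $\bigl(\begin{smallmatrix}p_1&p_2\\p_1&bp_2\end{smallmatrix}\bigr)$ has spectral radius exceeding $1$; since $v_n\ge1$ implies $v_n^2\ge v_n$, the nonlinear orbit dominates the linear one and thus diverges. Your fixed-point exclusion argument is equally valid and arguably more self-contained, at the cost of the monotonicity check you supply. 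For the existence part of (iii), the paper does not use a telescoping series; it simply combines (ii) with the iterated upper bound $v_n\le(p_1+bp_2)^{2^n-1}$, rewritten as $(p_2bv_n)^{2^{-n}}\le(p_1+bp_2)\bigl(p_2b/(p_1+bp_2)\bigr)^{2^{-n}}$, so that $(p_2bv_n)^{2^{-n}}$ is increasing and bounded. This is slightly more economical since it reuses (ii), whereas your telescoping argument is independent of (ii). For monotonicity of $\rho$ in $b$, the paper notes more briefly that each $v_n$ is a polynomial in $b$ with nonnegative coefficients, which implies your comparison inequality at once.
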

\noindent{\sl Remark:} The Theorem implies, in particular,
\begin{align}\label{llunvn}
\log\log v_n&=n\log 2+\log\log\rho+o(n)\cr
\log\log u_n&=n\log 2+\log\log\rho+o(n)
\end{align}
\begin{proof}
(i) Both $u_n$ and $v_n$ are bigger than they would be with the linear map
obtained by replacing $v^2$ by $v$ and omitting $p_0$ in (\ref{ruv}),
with which they would go to infinity exponentially as $n\to\infty$.
Since $v_n\ge u_n$ it follows that $v_n/u_n\to b$ as $n\to\infty$. 

\noindent (ii) We have $v_n>p_2bv_{n-1}^2$ for $n\ge1$, or 
$p_2bv_n>(p_2bv_{n-1})^2$, which gives (ii).

\noindent (iii-v)
Using $\{1\leq u\le v\}$ we have 
\beqa\label{vnupper}
v_n&\le&(p_1+p_2b)v_{n-1}^2\le(p_1+p_2b)((p_1+p_2b)v_{n-2}^2)^2\le\cdots\cr
&\le&(p_1+p_2b)^{1+2+4+\dots+2^{m-1}}v_{n-m}^{2^m}\le\cdots\cr
&\le&(p_1+p_2b)^{1+2+4+\dots+2^{n-2}}v_1^{2^{n-1}}\cr
&=&(p_1+p_2b)^{2^{n-1}-1}(p_1+p_2b)^{2^{n-1}}
=(p_1+p_2b)^{2^n-1}
\eeqa
which can be written as
\beq\label{iii}
(p_2bv_n)^{2^{-n}}\le(p_1+p_2b)\Bigl({p_2b\over p_1+p_2b}\Bigr)^{2^{-n}},
\qquad n\ge0
\eeq
The first claims in (iii-iv) and the upper bound in (v) follow from (i)(ii) and (\ref{iii}). The lower bound in (v) follows from
\beqa\label{vnlower}
v_n&>&p_2bv_{n-1}^2>p_2b(p_2bv_{n-2}^2)^2>\cdots>(p_2b)^{1+2+4+\dots+2^{m-1}}v_{n-m}^{2^m}\cr
&=&(p_2b)^{2^m-1}(v_{n-m})^{2^m}>\cdots>(p_2b)^{2^{n-1}-1}v_1^{2^{n-1}}\cr
&=&(p_2b)^{2^{n-1}-1}(p_1+p_2b)^{2^{n-1}}
=(p_2b)^{-1}\Bigl(p_2b(p_1+p_2b)\Bigr)^{2^{n-1}}
\eeqa
\noindent(iii) $\rho(b)$ is nondecreasing in $b$ because $v_n$ is a polynomial in $b$ with positive coefficients.

\noindent(iv) From the first part of (iv), $\rho(b)>1$ because $v_n\nea\infty$ as $n\nea\infty$.

\noindent(vi) follows from (iii) and (\ref{psi}). 
\end{proof}

\begin{figure}
\begin{center}
\resizebox{8cm}{!}{\includegraphics{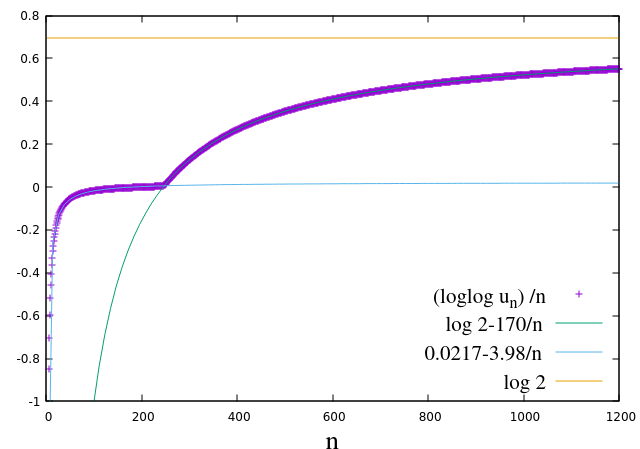}}
\caption{Numerical values of $\log(\log u_n)/n$ as function of $n$, for 
$p_1=0.9,\,b=1.1$, with asymptote at $\log 2$ and fits of the form $\ga-\del/n$ 
with ranges $[50:1000]$ (green) and $[10:50]$ (blue).}
\label{p109}
\end{center}
\end{figure}
\begin{figure}
\begin{center}
\resizebox{8cm}{!}{\includegraphics{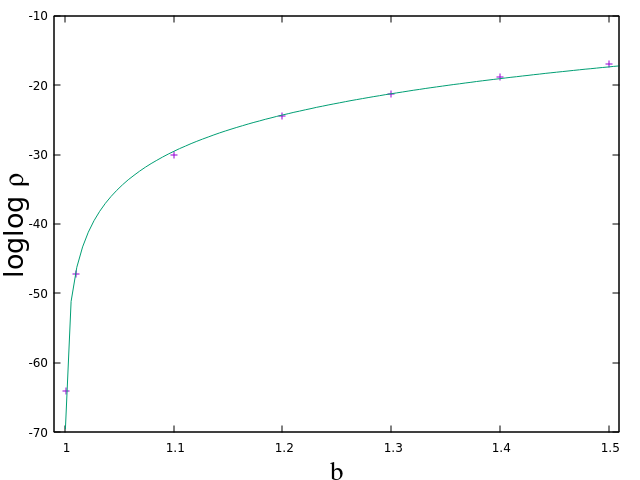}}
\caption{$\log\log\rho$ as function of $b$, for 
$p_1=0.9$, with fit $-12.163+7.5587\log(b-1)$.}
\label{vb}
\end{center}
\end{figure}
\begin{figure}
\begin{center}
\resizebox{9cm}{!}{\includegraphics{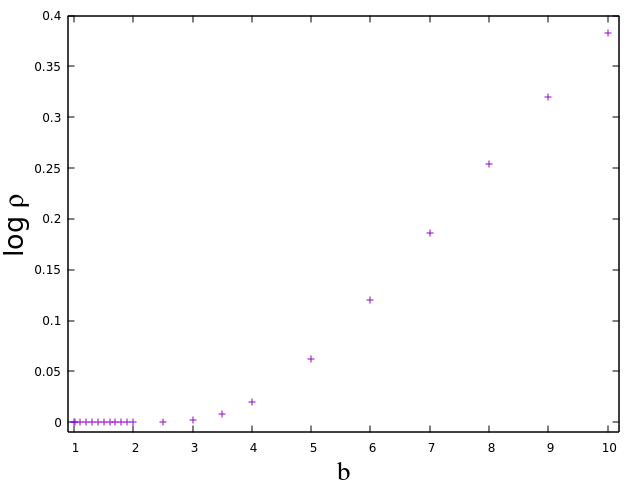}}
\caption{$\log\rho$ as function of $b$, for $p_1=0.9$.}
\label{expvb}
\end{center}
\end{figure}

If $p_2b(p_1+p_2b)<1$ the lower bound in (v) is useless.
We examine numerical values, see Fig. \ref{p109}.
For $p_1=0.9,\,b=1.1$ there seems to be a crossover: up to $n\sim 50$ there is
a good fit $\frac{1}{n}\log\log u_n\simeq 0.145-5.0/n$, while beyond  $n\sim 50$
there is a good fit $\frac{1}{n}\log\log u_n\simeq\log 2-30.0/n$ (up to 
$n=1000$, not shown). This is consistent with (\ref{llunvn}), with
$\log\log\rho\simeq-30.0$. The dependence of $\rho$ or $\psi$ upon $b$ for $p_1=0.9$ is sketched in Fig. \ref{vb} and Fig. \ref{expvb}.

\bigskip\noindent{\bf Data availability:}
 Data sharing not applicable to this article as no datasets were generated or analysed during the current study. All figures were drawn directly from the given two-dimensional dynamical system and simple use of the gnuplot ``fit''.

\bigskip\noindent{\bf Acknowledgements:} We would like to express our gratitude to Professor Ali Nesin as well as all the staff of Nesin Mathematics Village for their hospitality and the most pleasant environment for unhindered work they provided during our stay.
This work also benefitted at CY Cergy Paris University from the environment of
labex MME-DII
(Modèles Mathématiques et Economiques de la Dynamique, de l'Incertitude et
des Interactions), ANR11-LBX-0023-01. 

\end{document}